\documentclass[aps,prx,twocolumn,amsmath,longbibliography]{revtex4-2}
\usepackage{amsmath}
\usepackage{amsthm}
\usepackage{amssymb}
\usepackage{graphicx}
\usepackage{xcolor}
\usepackage{braket}
\usepackage{algorithm}
\usepackage[makeroom]{cancel}
\usepackage{algpseudocode}
\let\savecorresponds\corresponds
\let\corresponds\relax

\usepackage{changes}
\usepackage{amsthm}
\usepackage{svg}
\usepackage{changes}
\let\corresponds\savecorresponds

\DeclareFontFamily{U}{mathx}{\hyphenchar\font45}
\DeclareFontShape{U}{mathx}{m}{n}{
      <5> <6> <7> <8> <9> <10>
      <10.95> <12> <14.4> <17.28> <20.74> <24.88>
      mathx10
      }{}
\DeclareSymbolFont{mathx}{U}{mathx}{m}{n}
\DeclareMathSymbol{\bigtimes}{1}{mathx}{"91}

\newtheorem{theorem}{Theorem}[section]
\newtheorem{lemma}[theorem]{Lemma}

\definecolor{THc}{rgb}{0.5,0.6,0.2}

\definecolor{myurlcolor}{rgb}{0,0,0.7}
\definecolor{myrefcolor}{rgb}{0.1,0,0.9}

\usepackage[
	breaklinks,
	pdftex,
	colorlinks=true, 
	linkcolor=myrefcolor,
	citecolor=myurlcolor,
	urlcolor=myurlcolor
]{hyperref}

\newcommand{\argmin}{{\rm argmin}}
\newcommand{\nodes}{{\cal V}}
\newcommand{\edges}{{\cal E}}
\newcommand{\cnums}{{\mathbb C}}

\newcommand{\tr}{{\rm Tr}}
\newcommand{\Z}[1]{{\mathbb Z}_{#1}}
\newcommand{\bfi}{{\mathbf i}}
\newcommand{\bfj}{{\mathbf j}}

\begin{document}

\title{Large-scale quantum annealing simulation with tensor networks and belief propagation
}
\author{Ilia A. Luchnikov}
\author{Egor S. Tiunov}
\author{Tobias Haug}
\author{Leandro Aolita}
\affiliation{Quantum Research Center, Technology Innovation Institute, Abu Dhabi, UAE}

\begin{abstract} 
Quantum annealing and quantum approximate optimization algorithms hold a great potential to speed-up optimization problems. 
This could be game-changing for a plethora of applications. 
Yet, in order to hope to beat classical solvers, quantum circuits must scale up to sizes and performances much beyond current hardware. 
In that quest, intense experimental effort has been recently devoted to optimizations on $3$-regular graphs, which are computationally hard but experimentally relatively amenable.
However, even there, the amount and quality of quantum resources required for quantum solvers to outperform classical ones is unclear. 
Here, we show that quantum annealing for $3$-regular graphs can be classically simulated even at scales of $1000$ qubits and $4.8\times 10^6$ two-qubit gates with all-to-all connectivity.
To this end, we develop a {\it graph tensor-network quantum annealer} (GTQA) able of high-precision simulations of Trotterized circuits of near-adiabatic evolutions. Based on a recently proposed belief-propagation technique for tensor canonicalization, GTQA is equipped with re-gauging and truncation primitives that keep approximation errors small in spite of the circuits generating significant amounts of entanglement.
As a result, even with a maximal bond dimension as low as $\chi=4$, GTQA produces solutions competitive with those of state-of-the-art classical solvers. 
For non-degenerate instances, the unique solution can be read out from the final reduced single-qubit states.  In contrast, for degenerate problems, such as MaxCut, we introduce an approximate measurement simulation algorithm for graph tensor-network states. This can not only sample from the corresponding outcome distribution but also evaluate its probabilities, being thus also interesting beyond the scope of annealers. 
On one hand, our findings showcase the potential of GTQA as a powerful quantum-inspired optimizer. On the other hand, they considerably raise the bar required for experimental demonstrations of quantum speed-ups in combinatorial optimizations.
\end{abstract}

\maketitle
\section{Introduction}

Quantum computers may solve hard combinatorial optimization problems in large-scale regimes where classical methods struggle. This would have a major impact on diverse areas such as logistics, finance, energy, biotechnology, and machine learning%
~\cite{abbas2023quantum}. 
One of the most promising routes is quantum annealing (QA)~\cite{kadowaki1998quantum,farhi2000quantum,hauke2020perspectives, boixo2014evidence, lanting2014entanglement, bettina2015quantum}, where an adiabatic (i.e. slowly-varying) time-evolution from a reference state to the ground state of a target Ising Hamiltonian encoding the solution is driven. 
Another celebrated approach is quantum approximate optimization algorithms (QAOAs)~\cite{farhi2014quantum, preskill2018quantum, zhou2020quantum,PhysRevLett.125.260505}, which can be seen as coarse-grained, short-depth versions of QA. There, one variationally optimizes the Hamiltonian schedule, instead of using an adiabatic evolution.
Among the native problems solved by QA and QAOA, one finds quadratic unconstrained binary optimization (QUBO) and its closely related MaxCut. These are paradigmatic NP-hard optimizations on a graph.  
A prominent subclass is that of $d$-regular graphs, where each vertex has constant connectivity $d$. 
These lend themselves better to physical implementations than higher-connectivity graphs, yet they are known to encompass hard instances with real-world applications~\cite{commander2009maximum}.
In fact, even finding approximate solutions on $3$-regular graphs is known to be NP-hard~\cite{berman1999some}. %

This has fueled a great deal of activity on quantum optimization algorithms for $3$-regular and other sparse graphs. On the experimental side, impressive proof-of-principle demonstrations have been achieved.
With superconducting-qubit circuits, QAOAs on $3$-regular graphs have been implemented for example %
for instances of 22~\cite{harrigan2021quantum} and $120$~\cite{sachdeva2024quantum} vertices. %
For trapped ions, $32$- and $130$-vertex instances have been studied respectively with circuits of more than $300$ two-qubit gates~\cite{shaydulin2023qaoawith} and via mid-circuit measurements~\cite{decross2023qubit,moses2023race}. 
Additionally, relaxations of $3$-regular graph problems were explored on superconducting qubits~\cite{dupont2024quantum} and trapped ion~\cite{ponce2023graph}. 
Moreover, with Rydberg atoms, QAOAs for few-vertex MaxCut instances~\cite{graham2022multi} have been implemented as well as QAOAs and approximate QA for %
maximal independent set problems on sparse graphs
of up to 289 vertices~\cite{doi:10.1126/science.abo6587}, remarkably. 
In turn, on the theory side, there is extensive literature on quantum solvers on $3$-regular graphs, with analytic performance guarantees~\cite{wurtz2021maxcut}, numerical performance studies~\cite{wurtz2021fixed}, considerations of experimental noise~\cite{stilck2021limitations}, and benchmarks against classical methods~\cite{liu2015quantum}.

\begin{figure*}[t!]
		\centering
		\includegraphics[width=\linewidth]{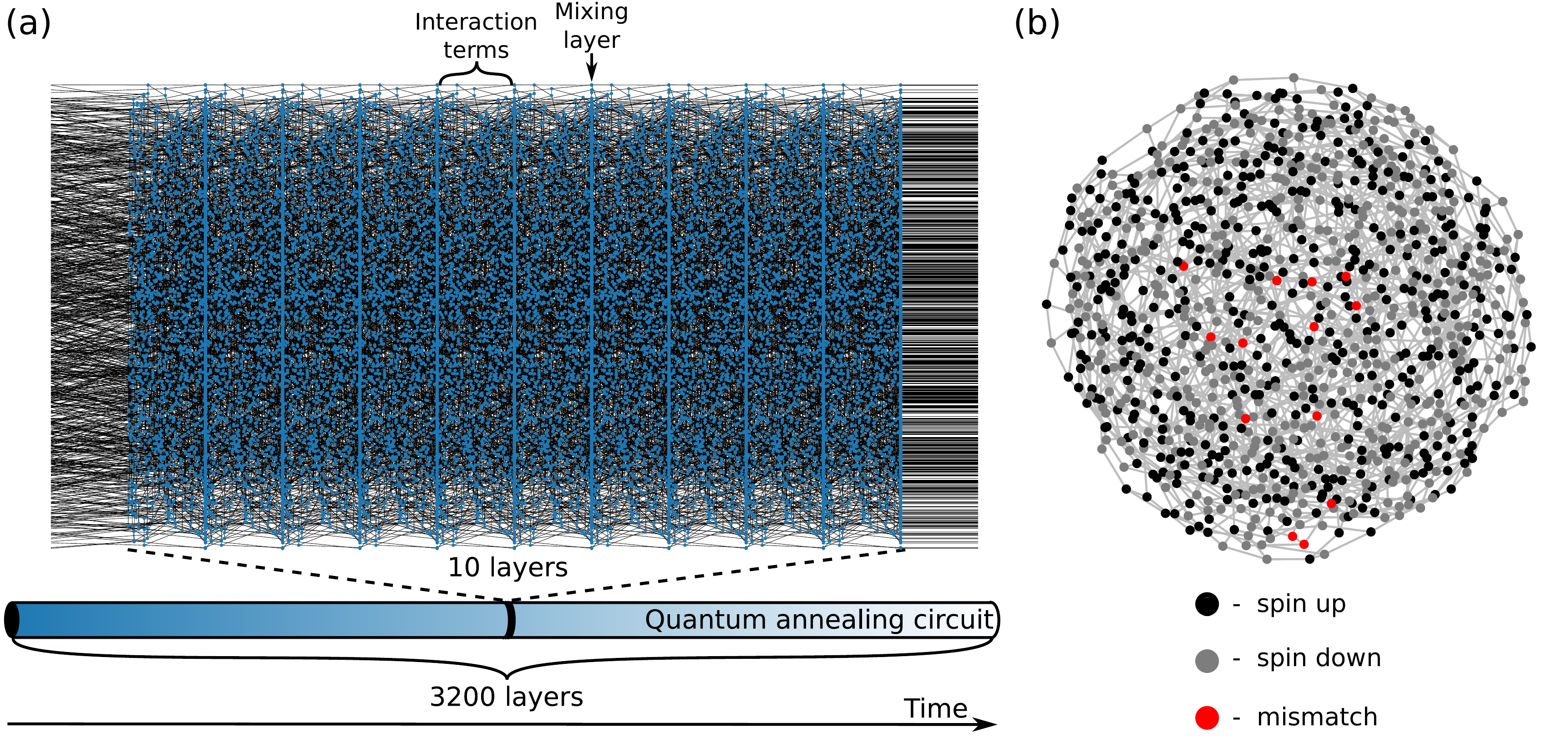}
		\caption{{\bf Schematics of the largest quantum-annealing circuit simulated and the corresponding instance solved.} 
  (a) Network representation of $10$ out of $3200$ gate layers of the entire Trotterized circuit for a random QUBO instance on the problem-input graph $G$ shown in panel (b). 
  Each edge (in black) in the network represents a %
  qubit and each node (in blue) a %
  gate. Each gate layer is in turn composed of a layer of single-qubit gates, corresponding to the mixing term of the annealing Hamiltonian, and a layer of 2-qubit entangling gates, corresponding to the interaction terms of the target Ising Hamiltonian. The  network contains in total $4.8\times 10^6$ %
  2-qubit gates, applied on $N=1000$ %
  qubits according to the connectivity in $G$.
  (b) Random 1000-vertex 3-regular graph $G$ used and the solution obtained by the
graph tensor-network quantum annealer (GTQA). Mismatches with the bit-string given by the best solution found among 40 heuristic solvers are colored in red; with the corresponding objective-function value mismatch being only of $0.04\%$ [see Fig. \ref{fig:largest_numerical_experiment}(a) for details].  
  }
    	\label{fig:circuit_tn}
\end{figure*}

However, demonstrating actual quantum speed-ups for combinatorial optimizations is challenging. First, for NP-hard problems quantum computers are expected to offer (at best) asymptotic quadratic speed-ups in the number of operations but their clock cycle per operation is orders of magnitude slower than in classical electronics. This might imply huge numbers of qubits required to actually observe concrete advantages~\cite{guerreschi2019qaoa}.
Second, classical heuristic solvers are abundant and work extremely well in practice~\cite{dunning2018what}. These include powerful physics-inspired solvers such as coherent Ising machines, simulated bifurcation machines, variational-neural annealers~\cite{mcmahon2016fully,honjo2021100,mohseni2022ising,goto2019combinatorial,goto2021high,hibat2021variational}, or tensor-networks based solvers~\cite{lami2023quantum}. 
Third, even if in principle advantageous, quantum algorithm implementations on actual quantum hardware must display better performances than their corresponding simulation on a classical computer. While classical simulation is hard in general, several impressive quantum computation experiments turned out to be easily classically simulable~\cite{tindall2023efficient,pan2022simulation}. This has triggered an arms race between improving quantum computers and developing better classical simulation methods.

Since $3$-regular graphs define highly unstructured problems, they are not amenable to standard simulation techniques. For example, matrix product states (MPSs) fail beyond one dimension~\cite{garcia2007matrix, verstraete2008matrix, schollwock2011density, vidal2003efficient, vidal2004efficient, haegeman2011time, haegeman2016unifying, kloss2018time}. For higher-dimensional tensor networks (TNs), such as projected entangled-pair states~\cite{verstraete2004renormalization, jordan2008classical, cirac2021matrix} and other sophisticated networks~\cite{gray2021hyperoptimized, gray2024hyperoptimized,pan2020contracting,hauru2018renormalization, jahromi2019universal, patra2024projected}, tensor contraction scales in general exponentially with the system size. Alternatively, approximate contraction techniques exist, but their accuracy is in general hard to control unless using intractably large tensors~\cite{nishino1996corner, orus2009simulation, corboz2010simulation, corboz2011stripes, levin2007tensor,evenbly2015tensor, evenbly2017algorithms, xie2012coarse, vanderstraeten2022variational, vieijra2021direct, zaletel2020isometric, lin2022efficient, soejima2020isometric}.
Recently, a new paradigm~\cite{alkabetz2021tensor} for TN manipulation %
based on belief propagation (BP)~\cite{mezard2009information, kschischang2001factor, pearl1988probabilistic, wainwright2003tree, yedidia2005constructing, lauritzen1988local} has been introduced, which has found various applications~\cite{tindall2023gauging, sahu2022efficient, tindall2023efficient,guo2023block,kaufmann2024blockbp}. 
These methods feature a complexity scaling linearly in the number of vertices and are exact for tree graphs. Albeit not exact for generic graphs, they can give accurate approximations for %
sufficiently structured graphs. In fact, they have proven successful~\cite{tindall2023efficient,tomislav2024fast, beguvsic2023fast, liao2023simulation, PhysRevResearch.6.013326} in simulating $127$-qubit circuits with $2,880$ two-qubit gates on IBM's heavy-hex lattice~\cite{kim2023evidence}. However, it is an open question whether these methods can accurately simulate quantum circuits at large scales on unstructured geometries; and, moreover, whether they can do it for algorithms of practical relevance.

Here, we answer these questions in the affirmative. We show that QA for random $3$-regular graphs can be classically simulated even at scales of $1000$ %
qubits and $4.8\times 10^6$ %
two-qubit gates with all-to-all connectivity. To this end, we develop a simulation toolkit for QA based on graph TNs and belief propagation, which we dub the {\it graph tensor-network quantum annealer} (GTQA). This is able of high-precision simulations of Trotterized, near-adiabatic evolutions with respect to Ising Hamiltonians on unstructured lattices with low connectivity.
To grasp the daunting scale of the TN, in Fig.~\ref{fig:circuit_tn}(a) we show $0.3\%$ of the circuit for the largest graph considered [shown in Fig. \ref{fig:circuit_tn}(b)]. 
The use of a TN geometry that matches the random graph in question, together with suitable canonicalization %
and truncation primitives, allows us to keep approximation errors low in spite of using extremely small tensor cores relative to the amount of generated entanglement.
For example, for Fig.~\ref{fig:circuit_tn}(a) maximal bond dimension $\chi=4$ is enough for GTQA to achieve solutions competitive with those of state-of-the-art solvers. In contrast, the same simulation would take an MPS a bond dimension orders of magnitude higher (lower-bounded by $580$ and upper-bounded by $4^{158}$).  

We benchmark GTQA on random QUBO and MaxCut problems of up to 1000 and 150 vertices, respectively. For QUBO, we study the transition from non-equilibirum (quench) dynamics to adiabaticity by probing the entanglement-entropy evolution over annealing schedules with different durations $T$ and fixed time step $\delta t$. That is, we use an evolution time proportional to the number of Trotter layers.
For the 1000-vertex instance, we observe adiabaticity at $T/\delta t=3600$ Trotter layers, as mentioned in Fig.~\ref{fig:circuit_tn}(a), where the final state is close to a product state encoding of the solution bit-string. We find that this solution matches the best one among 40 heuristic solvers in 987 out of the 1000 bits [see Fig.~\ref{fig:circuit_tn}(b)], giving an optimal objective function value only $0.04\%$ worse. 
Interestingly, the maximal entanglement over the evolution does not grow further with increasing $T$, indicating that even more adiabatic evolutions would be possible with the same bond dimension $\chi$.
Additionally, we estimate state errors due to truncation and non-perfect canonicalization. Estimates based on the discarded singular values show infidelities saturating at $3.2\times 10^{-4}$ and $3.2\times 10^{-2}$ for the quenched and adiabatic cases, with corresponding $\chi=32$ and $\chi=4$, respectively.
In turn, comparison with exact brute-force simulations for random instances of up to $N=26$ qubits gives average trace-distance errors between $10^{-2}$ and $10^{-3}$ for the reduced single-qubit states.
Moreover, for these small systems, the median error does not show growth with $N$ or $T$, remarkably.

MaxCut problems, in contrast, are intrinsically degenerate and typically lead to significantly higher entanglement generation than non-degenerate QUBO. For MaxCut, the final state of the QA algorithm is a superposition of computational-basis states each one encoding a valid solution.
Hence, one must measure in the computational basis 
to get a solution bit-string. 
To address this, we introduce a measurement-simulation algorithm for graph TN states, based on BP-guided sampling~\cite{mezard2009information}.
With this, we solve MaxCut on a random 3-regular graph of $N=150$ vertices. We use $T/\delta t=200$ Trotter layers and $\chi=32$.
We find that GTQA's solution matches the best one among the $40$ heuristic solvers considered up to an approximation ratio $\alpha\approx 0.99$.
Finally, apart from sampling, our measurement-simulation primitive can also estimate outcome probabilities. This makes it potentially relevant also for applications beyond the current scope, such as for instance cross-entropy benchmarking~\cite{arute2019quantum} or classical shadows~\cite{huang2020predicting,mcginley2023shadow,tran2023measuring}, as we elaborate in Sec.~\ref{sec:discussion}.

Our results provide a powerful recipe for building quantum-inspired combinatorial-optimization solvers competitive with the best available heuristics. A distinctive feature with other classical solvers though is that, since it simulates the actual quantum state of the QA process, GTQA can potentially harness quantum effects such as entanglement and quantum tunneling to escape local minima~\cite{bauer2015entanglement, layden2023quantum}. In turn, our findings also show that quantum dynamics can be classically simulated even for unstructured lattice geometries of low connectivity. In particular, this suggests that the search for quantum advantage should focus on higher-connectivity graphs, where BP-based TN methods struggle.

The paper is structured as follows: In Sec.~\ref{sec:preliminaries} we introduce the graph tensor-network Ansatz to approximate many-qubit quantum states, the belief-propagation algorithm, and the Vidal gauge (the canonical gauge used to truncate the Ansatz). In Sec.~\ref{sec:QUBO} we present the GTQA algorithm and apply it to the non-degenerate QUBO case. In Sec.~\ref{sec:MaxCut} we introduce our measurement-simulation primitive and apply GTQA to the degenerate MaxCut case. In Sec.~\ref{sec:why_gtqa_works} we discuss the limitations and expected regimes of applicability of GTQA. Finally, in Sec.~\ref{sec:discussion} we present the conclusions and discuss perspectives of our work.

\section{Preliminaries}
\label{sec:preliminaries}
\subsection{Ansatz}
We start by introducing a \emph{graph tensor network Ansatz} that is used to represent many-qubit states. 
Let $G=(\nodes, \edges)$ be a \emph{connectivity graph}, where $\nodes = \{1,\dots,N\}$ is the set of vertices and $\edges\subseteq \{\{a, b\} \in \nodes\times\nodes|a \neq b\}$ the set of edges.
We associate each vertex in $\nodes$ to a tensor and each tensor to a qubit. In turn,  $\edges$  indicates how tensors (qubits) are linked (interact).
We take edges with different orders of nodes as equivalent, i.e., $\{a, b\}\equiv\{b, a\}$. See Fig.~\ref{fig:central_fig}(a) for a connectivity graph example. By $\partial a$ we denote the set of neighboring nodes of $a$, i.e., $\partial a = \{b\in \nodes|\{a, b\}\in \edges\}$. We equip each edge $\{a, b\}$ with a \emph{bond index} $j_{ab} \in \Z{d_{ab}}$, where $\Z{d_{ab}} = \{0,\dots, d_{ab} - 1\}$ and $d_{ab}$ is the dimension of the bond index, called the \emph{bond dimension}. Note that $j_{ab}\equiv j_{ba}$ and $d_{ab} = d_{ba}$. We equip each vertex $a$ with a \emph{physical index} $i_a \in \Z{2}$ enumerating the basis state of a qubit. Finally, we equip each vertex $a$ with a \emph{tensor} $T_a$, which can be viewed as a function $T_a: \Z{2}\times\left(\bigtimes_{b\in\partial a} \Z{d_{ba}}\right) \to \cnums$ mapping a physical index and bond indices to a complex number. $T_{a}$ can also be thought of as a $|\partial a| + 1$ dimensional complex rectangular hypermatrix. To access a tensor value, we use square brackets, i.e., $T_a[i_a, \bfj_{\partial a}]$, where $\bfj_{\partial a} = \{j_{ba}|b\in \partial a\}$ is the set of bond indices of $T_a$.

Using the above notations, our Ansatz for an $N$-qubit wave function $\Psi$ reads
\begin{eqnarray}
	\label{eq:graph_tn}
	\Psi\left[\bfi_{\nodes}\right] = \sum_{\bfj_\edges}\prod_{a\in\nodes}T_a\left[i_a,\bfj_{\partial a}\right],
\end{eqnarray}
where $\bfi_\nodes$ is the set of all $N$ physical indices and $\bfj_\edges$ is the set of all bond indices. In Fig.~\ref{fig:central_fig}(b) we give an example of the Ansatz using standard graphical notations for tensor networks. Note that every tensor in this network has a physical index, in contrast to more complex Ansatzes such as the MERA~\cite{vidal2008class} which includes tensors containing exclusively bond indices. We call the Ansatz in Eq.~\eqref{eq:graph_tn} a \emph{graph tensor network} and use it throughout this paper.
\begin{figure*}
	\centering
	\includegraphics[width=\linewidth]{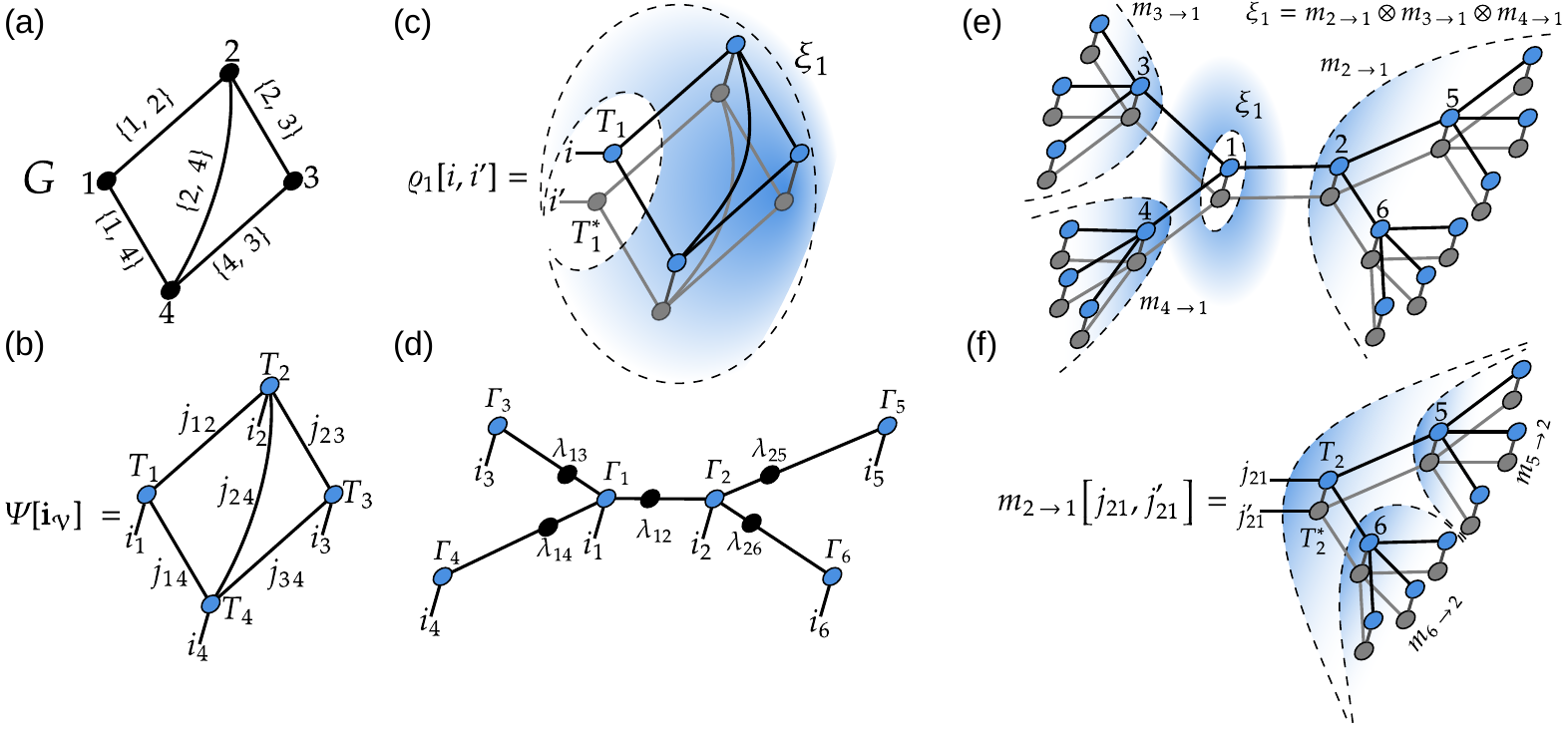}
	\caption{{\bf %
 Main graph tensor-network concepts and tools.} (a) Example of a 4-vertex connectivity graph, underlying the 4-qubit graph TN state in panel (b).
 (b) Tensor diagram example for the Ansatz in Eq.~\eqref{eq:graph_tn}. (c) Tensor diagram example for the reduced density matrix computed by Eq.~\eqref{eq:partial_dens}. The outer colored region is the environment tensor $\xi_1$. (d) Tensor diagram example for a tensor tree in the Vidal gauge. (e) Example of the factorization Eq.~\ref{eq:bp_approx} of the environment tensor for a tree graph tensor network with vertex $1$ as a root. One can note, that messages corresponding to different branches are disconnected. That is, each message matrix is equivalent to the environment tensor if all other branches are removed. Thus, the environment tensor $\xi_1$ factorizes into the tensor product of messages, one from each branch of the root. (f) A diagrammatic demonstration of the recursive relation Eq.~\eqref{eq:local_consistency}. One can note that the recursive structure of a tree implies that the message $m_{2\rightarrow 1}$ can be computed as the contraction of $m_{5\rightarrow 2}$, $m_{6\rightarrow 2}$, $T_2$ and $T_2^*$ as it is expressed in Eq.~\eqref{eq:local_consistency}.}
	\label{fig:central_fig}
\end{figure*}

\subsection{Approximate reduced density-matrix computation with BP}
\label{sec:reduced_den_mat}
An important component that we use in all our numerical experiments is the computation of single-qubit reduced density matrices from the state in Eq.~\eqref{eq:graph_tn}. The exact calculation of these reductions requires in general computational resources that scale exponentially with \(N\). It is useful to express the single-qubit reduced density matrix $\varrho_{a}$ of the $a$-th qubit in terms of $T_a$ and an \emph{environment tensor} $\xi_a$ that encapsulates the exponential complexity of the contraction in question. More precisely, we express the $(i_a,i_a')$-th element of $\varrho_{a}$ in the computational basis as
{\small
\begin{eqnarray}
	\label{eq:partial_dens}
	\varrho_{a}[i_a, i_a'] = \sum_{\bfj'_{\partial a}}\sum_{\bfj_{\partial a}} T_{a}\left[i_a,\bfj_{\partial a}\right]T^\ast_{a}\left[i_a',\bfj'_{\partial a}\right]\xi_a\left[\bfj_{\partial a}, \bfj'_{\partial a}\right],
\end{eqnarray}
}
where $\ast$ stands for complex conjugate and $\xi_a$ is the $a$-th environment tensor. The latter is the result of all the tensor contractions towards $\varrho_a$ except the very final ones involving the $a$-th physical indices $i_a$ and $i_a'$. See Fig~\ref{fig:central_fig} (c) for a graphical example. With $\xi_a$ precomputed, Eq.~\eqref{eq:partial_dens} can be computed efficiently. Formally, the environment tensor can be expressed as
\begin{eqnarray}
	\label{eq:env}
	\xi_a\left[\bfj_{\partial a}, \bfj'_{\partial a}\right]&&=\sum_{\bfj'_\edges\setminus \bfj'_{\partial a}}\sum_{\bfj_\edges\setminus \bfj_{\partial a}}\sum_{\bfi_\nodes\setminus \{i_a\}}\prod_{b\in \nodes\setminus a} T_b\left[i_b, \bfj_{\partial b}\right]\nonumber\\&&\times T^*_b\left[i_b, \bfj'_{\partial b}\right],
\end{eqnarray}
where $\bfj_\edges \setminus \bfj_{\partial a}$ represents the set of all bond indices except those of $T_a$, and $\bfi_\nodes \setminus \{i_a\}$ denotes the set of all physical indices except $i_a$. As mentioned, $\xi_a$ encapsulates the exponential complexity of the entire computation. Therefore, an approximate method for evaluating $\xi_a$ is necessary.

For this, we first note that for a tree tensor network with vertex $a$ as the root the environment tensor factorizes as
\begin{eqnarray}
	\label{eq:bp_approx}
	\xi_a\left[\bfj_{\partial a}, \bfj'_{\partial a}\right] = \prod_{b\in\partial a}m_{b\rightarrow a}[j_{ba}, j'_{ba}],
\end{eqnarray}
where $m_{b \rightarrow a}$ is the environment tensor of a sub-tree linked to $a$ by the edge $\{b, a\}$. We refer to $m_{b \rightarrow a}$ as a \emph{message} matrix, because the equations for computing $m_{b \rightarrow a}$ for all edges resemble the process of passing messages between vertices in the message passing algorithms~\cite{mezard2009information}. See Fig.~\ref{fig:central_fig}(e) for a graphical example. One can show, that messages satisfy the following relation 
\begin{eqnarray}
	\label{eq:local_consistency}
	m_{b\rightarrow a}[j_{ba},j'_{ba}] &&= \sum_{\bfj_{\partial b\setminus a}}\sum_{i_b} T_b\left[i_b, \bfj_{\partial b}\right]T_b^*\left[i_b, \bfj'_{\partial b}\right]\nonumber\\&&\times\prod_{c\in \partial b\setminus a} m_{c\rightarrow b}[j_{cb},j'_{cb}],
\end{eqnarray}
which we derive in App.~\ref{appx:self_consistency_derivation}. Eq.~\eqref{eq:local_consistency} follows from the recursive structure of the sub-tree as it is shown in Fig.~\ref{fig:central_fig}(f). Eq.~\eqref{eq:local_consistency} implies that messages are Hermitian and positive-definite. We can solve Eq.~\eqref{eq:local_consistency} starting from the leaves of the tree, recursively updating messages as we descend towards the root. Having all the messages, one can compute Eq.~\eqref{eq:partial_dens} efficiently using Eq.~\eqref{eq:bp_approx}.

If $G$ is not a tree graph, i.e., if it has loops, Eq.~\eqref{eq:bp_approx} no longer holds; but it can be taken as a mean-field-like approximation to the actual environment tensor. This approximation can still be computed through Eq.~\eqref{eq:local_consistency}. To solve Eq.~\eqref{eq:local_consistency}, one can initialize messages as random positive-definite Hermitian matrices and run a fixed-point iteration method%
. More precisely, in each iteration one updates all edges in $\edges$ using Eq.~\eqref{eq:local_consistency} and iterates until convergence (as opposed to recursively solving for messages from leaves to roots, for tree graphs). This iterative approach is known as the BP algorithm~\cite{mezard2009information, tindall2023gauging}, which we describe in detail in App.~\ref{appx:bp_alg}. Note that, in general, convergence of the BP algorithm is not guaranteed. However, in practice, it typically converges well, and cases where it struggles to converge are discussed below.

\subsection{Vidal gauge and truncation}
\label{subsec:vidal_gauge}
To control the complexity of a graph tensor network, one should be able to truncate its bond indices. For tree tensor networks, this can be done with optimality guarantees. For this, we need to introduce a modification of the Ansatz Eq.~\eqref{eq:graph_tn} which reads
\begin{eqnarray}
	\label{eq:vidal_gauge_tn}
	\Psi[\bfi_\nodes] = \sum_{\bfj_\edges}\prod_{a\in\nodes} \Gamma_{a}[i_a, \bfj_{\partial a}]\prod_{\{b, c\}\in \edges}\lambda_{bc}[j_{bc}],
\end{eqnarray}
where $\{\lambda_{bc}\}_{\{b, c\}\in \edges}$ are newly introduced vectors assigned to each edge. One can compute $\lambda_{bc}$ as a singular vector of the following matrix $\sum_{j = 0}^{d_{ab} - 1} m^{\frac{1}{2}}_{a\rightarrow b}[j, k]m^{\frac{1}{2}}_{b\rightarrow a}[j, l]$, where $(.)^\frac{1}{2}$ stands for the square root of a matrix. In the tensor tree case $\lambda_{ab}$ coincides with the Schmidt coefficients and the entire Ansatz can be viewed as the simultaneous Schmidt decomposition of a state with respect to all single edge cuts of a tensor tree. 

The modified Ansatz Eq.~\eqref{eq:vidal_gauge_tn} also satisfies a \emph{local orthogonality condition} which is discussed in App.~\ref{appx:local_orthogonality}. It guarantees orthogonality of corresponding Schmidt vectors in case of a tree tensor graph. The modified Ansatz Eq.~\eqref{eq:vidal_gauge_tn} is called a \emph{Vidal gauge}. See Fig.~\ref{fig:central_fig}(d) for a graphical example. Due to the Eckart-Young-Mirsky theorem~\cite{eckart1936approximation, mirsky1960symmetric}, truncation of minimal Schmidt coefficients of a particular edge leads to the best low-bond-dimension approximation in either the $2$-norm or the Frobenius norm. For the formal description of the truncation procedure see App.~\ref{appx:vidal_gauge}. The global Frobenius error of the edge $\{a, b\}$ truncation reads
\begin{eqnarray}
	\varepsilon_{ab}^\chi = \sqrt{\sum_{j_{ab} =\chi}^{d_{ab} - 1}\lambda^2_{ab}[j_{ab}]},
\end{eqnarray}
where $\chi$ is the new bond dimension. The Vidal gauge can be computed efficiently from Eq.~\eqref{eq:graph_tn} using messages as it is shown in App.~\ref{appx:vidal_gauge}. For general graphs with loops, the Vidal gauge and the algorithm from App.~\ref{appx:vidal_gauge} can still be applied as heuristics.

To reduce the computational complexity, one often sets a graph tensor network to the Vidal gauge only once per several truncations. Each truncation corrupts the gauge, increasing errors in subsequent truncations. To track the deviation from the Vidal gauge, one defines a residual $R$ of the local orthogonality condition discussed in App.~\ref{appx:local_orthogonality}. $R$ can be seen as the distance to the Vidal gauge.
When $R$ becomes too large, it is necessary to perform \emph{regauging} to set the tensor network back to the Vidal gauge with $R = 0$. This consists of three steps: (\emph{i}) one transforms Eq.~\eqref{eq:vidal_gauge_tn} back to the initial Ansatz Eq.~\eqref{eq:graph_tn}; (\emph{ii}) one runs BP algorithm to compute messages; (\emph{iii}) one recovers the Vidal gauge with $R=0$ from Eq.~\eqref{eq:graph_tn} using messages and the algorithm from App.~\ref{appx:bp_alg}. Step (\emph{i}) involves splitting the Schmidt coefficients between neighboring tensors as
\begin{eqnarray}
	\label{eq:symmetric_gauge}
	T_a[i, \bfj_{\partial a}] = \Gamma_{a}[i, \bfj_{\partial a}]\prod_{b\in\partial a}\lambda^{\frac{1}{2}}_{ba}[j_{ba}].
\end{eqnarray}
Step (\emph{ii}) runtime can be sufficiently reduced by a proper messages initialization before running BP. If $R$ is small, $\{\lambda_{bc}\}_{\{b,c\}\in\edges}$ contains information about converged messages and we can reconstruct these messages as follows
\begin{eqnarray}
	\label{eq:reconstructed_messages}
	m_{a\rightarrow b}[j, j'] = \delta[j, j']\frac{\lambda_{ba}^{\frac{1}{2}}[j]}{\sum_{j}\lambda_{ba}^{\frac{1}{2}}[j]}.
\end{eqnarray}
Indeed, for $R = 0$ and vertex tensors fulfilling Eq.~\eqref{eq:symmetric_gauge}, messages Eq.~\eqref{eq:reconstructed_messages} satisfy Eq.~\eqref{eq:local_consistency} up to a constant factor. If $R\neq 0$ but small, the resulting messages are close to the solution of Eq.~\eqref{eq:local_consistency} and serve as a ``warm'' start for the BP algorithm. This typically converges in few iterations, much faster than starting from scratch.

\subsection{Quantum-gate application in the Vidal gauge}
\label{sec:gate_app}
To apply a single-qubit unitary gate in the Vidal gauge, one needs to update the corresponding tensor as follows
\begin{eqnarray}
	\tilde{\Gamma}_a[i_a,\bfj_{\partial a}] = \sum_{i'_a}W[i_a, i'_a]\Gamma_a[i'_a,\bfj_{\partial a}],
\end{eqnarray}
where $W$ is a single-qubit unitary gate applied to the $a$-th qubit. The Vidal gauge is preserved due to the unitarity of $W$.

The application of a two-qubit unitary gate is more involved. In case when a gate is applied to neighboring qubits one needs to update only neighboring tensors and the Schmidt coefficients in between. The corresponding algorithm called the \emph{simple update} algorithm is given in App.~\ref{appx:simple_update}. It allows one to evolve a graph tensor network while preserving the state in the Vidal gauge and truncate it when the bond dimension reaches a threshold $\chi$.

\section{Quantum annealing simulation for non-degenerate QUBO}
\label{sec:QUBO}
In this section, we present our graph tensor-network quantum annealer (GTQA) for the case of quadratic unconstrained binary optimization (QUBO). In particular, we consider non-degenerate QUBO instances, which simplifies the extraction of problem solution from the QA final state. 
We use the techniques discussed in Sec.~\ref{sec:preliminaries} as building blocks. The section is organized as follows. First, we formulate QUBO problems, introduce QA in general, and explain GTQA. Then, we benchmark GTQA against state-of-the-art classical solvers. Next, we study the entanglement-entropy dynamics induced by GTQA, and discuss the simulation complexity. Finally, we benchmark GTQA in terms of global state infidelities as well as trace-distance errors of the single-qubit reductions.

\subsection{Simulation algorithm}
\label{prgrph:large_scale_qa}
The problem we want to solve is the maximization of the following QUBO objective function
\begin{eqnarray}
	\label{eq:objective_function}
	E(\boldsymbol{x}) = \sum_{\{a,b\} \in \edges}J_{ab} \ x_ax_b + \sum_{a=1}^N h_a x_a,
\end{eqnarray}
over $N$-long strings $\boldsymbol{x} = (x_1,\dots,x_N)$, with  spin variables $x_a\in\{-1, 1\}$ for all $a\in\mathcal{V}$, and
where $J_{ab}$ and $h_a$ are respectively coupling constants and local magnetic fields. Here we consider only QUBO problems where the objective function is maximized by only solution string $\boldsymbol{x}^* = (x^*_1,\dots,x^*_N)$. 
We now encode the objective function $E(x)$ into the Ising-model quantum Hamiltonian
\begin{equation}
    H_{\rm Ising} = \sum_{\{a,b\} \in \edges}J_{ab} \ Z_aZ_b + \sum_{a=1}^N h_a Z_a,
\end{equation}
where we have the $z$ Pauli operator $Z_a$ acting on the $a$-th qubit. We map string $\boldsymbol{x}$ into $N$-qubit computational basis states $\ket{\boldsymbol{x}} = \bigotimes^N_{a=1}\ket{(-x_a+1)/2}$, i.e. spin variable $x_a=1$ is mapped to qubit state $\ket{0}$, and $x_a=-1$ to $\ket{1}$.
One can now see that the objective-function value is given by the expectation value of the Hamiltonian, i.e. $E(\boldsymbol{x})=\bra{\boldsymbol{x}} H_{\rm Ising} \ket{\boldsymbol{x}}$ and the maximal energy state of $H_{\rm Ising}$  matches the optimal solution $\ket{\boldsymbol{x}^*}$. 

QA %
finds the maximal energy by adiabatic transformation of the Hamiltonian over time $t$~\cite{hauke2020perspectives}. First, we start in the maximal energy state $\ket{\Psi(0)} = \ket{+}^{\otimes N}$ of the simple Hamiltonian $H_{\rm mixing} = \sum_{a=1}^N X_a$, where $\ket{+} =1/\sqrt{2}(\ket{0} + \ket{1})$ and $X_a$ is the $x$ Pauli operator. 
Then, we define the time-dependent Hamiltonian $H(t)=(1-s(t))H_{\rm Ising}+s(t)H_{\rm mixing}$ with parameter $s(t)$, which is a slowly varying function between $s(0)=1$ and $s(T)=0$.
We vary the Hamiltonian $H(t)$ according to the schedule $s(t)$ to drive a  time-dependent Hamiltonian evolution of the system for a total time $T$.
When the dynamics is chosen to be much slower than the inverse of the energy gap between the two largest energy states of $H(t)$ for all $t$, then the adiabatic theorem of quantum mechanics guarantees that the final state $\ket{\Psi(T)}$ is the maximal energy state of $H_{\rm Ising}$, which is the optimal solution is $\ket{\boldsymbol{x}^*}$.

To simulate QA classically, we use GTQA algorithm containing following steps: one represents the initial state $\ket{\Psi(0)}$ as a graph tensor network in the Vidal gauge which has $d_{ab} = 1$ for $\forall \{a, b\} \in \edges$ since $\ket{\Psi(0)}$ is a product state; one trotterizes the QA passage into a quantum circuit consisting of one- and two-qubit gates~\cite{suzuki1990fractal} and apply those gates sequentially to $\ket{\Psi(0)}$ preserving the Vidal gauge. See App.~\ref{appx:annealing_trotterization} for more details on the QA Trotterization. Whenever any bond index of a graph tensor network reaches a dimension greater than $\chi$, we truncate it. We occasionally perform regauging to keep the Vidal gauge valid. We use the final graph tensor network form of $\ket{\Psi(T)}$ as an approximation of $\ket{\boldsymbol{x}^*}$.%

Usually, to read out the solution string of the QA algorithm, one measures each qubit of the output state $\ket{\Psi(T)}$ in the computational basis $\{\ket{0},\ket{1}\}$. However, as discussed in Sec.~\ref{subsec:measurements_sampling}, simulating the post-measurement states with graph tensor networks presents subtleties coming from the necessary regauging after each qubit measurement simulation. Since the $\ket{\Psi(T)}$ is close to the product state $\ket{\boldsymbol{x}^*}$, a simpler way to extract (an approximation to) $\boldsymbol{x}^*$ from our graph tensor network representation of $\ket{\Psi(T)}$ is to compute the single-qubit reduced density matrix $\varrho_a$ of qubit $a$. Indeed, if $\ket{\Psi(T)}=\ket{\boldsymbol{x}^*}$,
\begin{eqnarray}
	\label{eq:rounding_rule}
	x^*_a = \begin{cases}
		\ \ 1,\quad {\rm if} \ \varrho_a[0, 0] > \frac{1}{2}, \\
		-1,\quad {\rm otherwise},
	\end{cases}
\end{eqnarray}
to get the solution string. 
Each single-qubit reduced density matrix can in turn be computed from the graph tensor network representation of $\ket{\Psi(T)}$ using Eq.~\eqref{eq:partial_dens}, Eq.~\eqref{eq:bp_approx} and messages computed via BP algorithm. %

\subsection{Benchmarking against conventional QUBO solvers}
\label{prgrph:qubo_benchmarking}

\begin{figure*}
	\includegraphics[width=\linewidth]{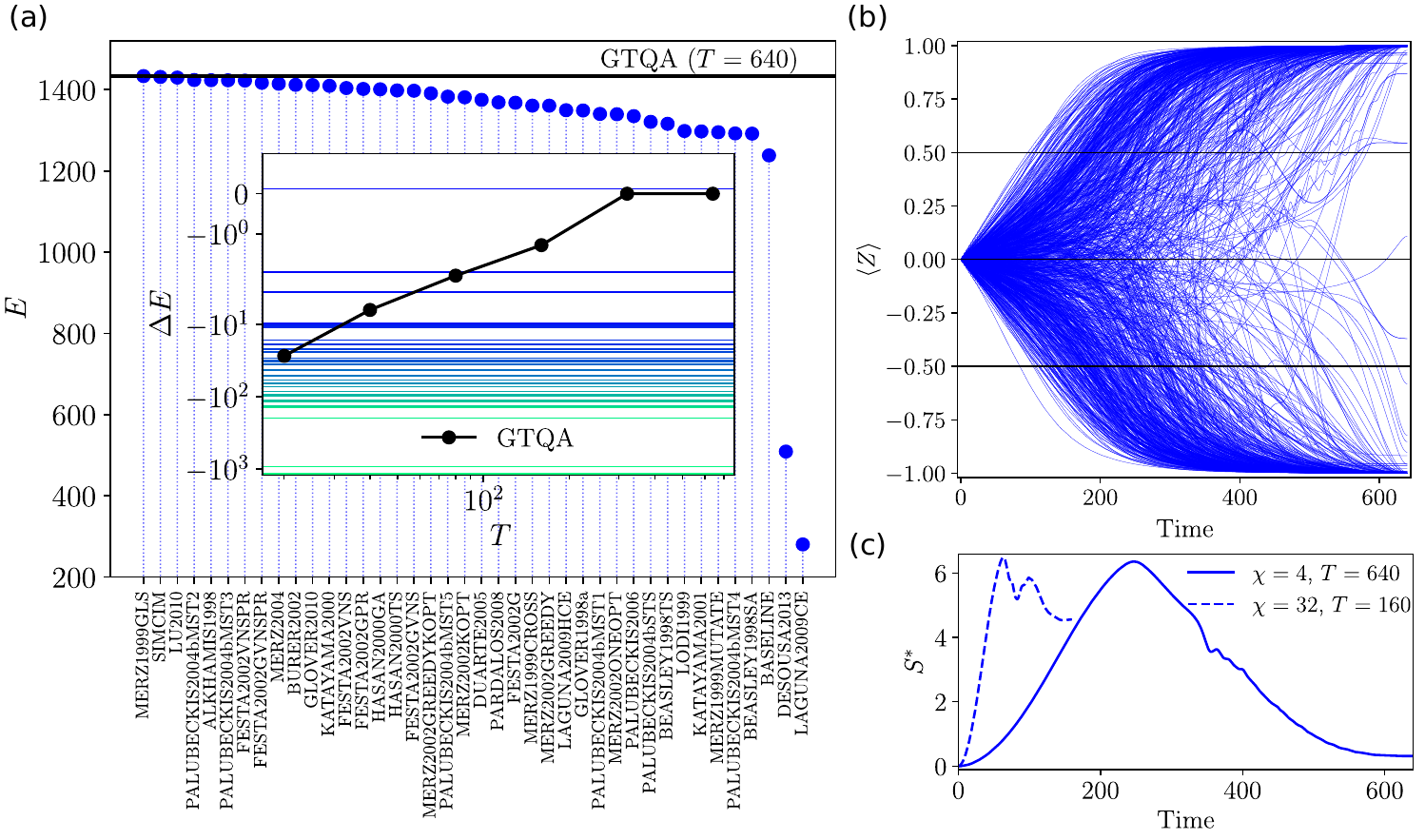}
	\caption{{\bf GTQA's performance on a QUBO.} Numerical results for a random 1000-qubit instance on the random 3-regular graph of Fig. \ref{fig:circuit_tn}(b). 
 (a) The comparison of the QUBO problem solution obtained by the GTQA algorithm and solutions obtained by heuristics from MQLib and SimCIM algorithm. The black horizontal line represents the largest $E$ value obtained by the GTQA algorithm which corresponds to $T = 640$. Blue dots corresponds to $E$ obtained by different heuristics from MQLib. In $X$-axis we provide names of the corresponding heuristics. The black curve in the inset demonstrates how $E$ obtained by the GTQA algorithm improves (larger is better) with increasing $T$. $Y$-axis in the inset represents $\Delta E$ which is the difference between the value found by GTQA and other methods. Colored horizontal lines in the inset show $\Delta E$ obtained by heuristics. (b) Dynamics of $\langle Z \rangle$ for each qubit for largest total annealing time $T=640$. $997$ qubits out of $1000$ crossed a confidence threshold $\pm0.5$ by the end of dynamics. X-axis represents time during the annealing process. (c) Approximate entanglement entropy with respect to the balanced graph bipartition cutting the minimum number of edges [see Eq. \eqref{eq:bipartition}] as a function of time $t$, for two different annealing durations $T$. For $T=160$, the system shows a quench dynamics, with the final state significantly entangled. In turn, for $T=640$, it features an almost-adiabatic behavior, with the very small final entropy. The small final entropy also agrees with the panel (b) where we observe that the final state is almost a product state, which immediately implies low entropy. Note also that the peak entanglement does not grow from $T=160$ to $T=640$. This suggests that one could tackle even higher $T$ (i.e., even more adiabatic schedules) with the same bond dimension $\chi=4$.}
	\label{fig:largest_numerical_experiment}
\end{figure*}

We compare the performance of the GTQA algorithm equipped with the simple solution string extraction Eq.~\eqref{eq:rounding_rule} with those of standard heuristic QUBO solvers from the MQLib project~\cite{dunning2018what} as well as the SimCIM algorithm~\cite{tiunov2019annealing} with fine-tuned hyperparameters. We consider a random {\it $3$-regular graph}. That is, a random choice of $\nodes$ and $\edges$ where all $N$ vertices have degree $3$ and an instance of Eq. \eqref{eq:objective_function} where all coefficients $J_{ab}$ and $h_a$ are chosen at random too. The specific choice of $\nodes$ and $\edges$ is done with an algorithm from Ref.~\cite{steger1999generating} implemented in the NetworkX library~\cite{hagberg2008exploring}; and coefficients $J_{ab}$ and $h_a$ are independently sampled from a normal distribution ${\cal N}(0, 1)$. Random choice of $h_a$ removes degeneracy of the solution string of $E(\boldsymbol{x})$ leading to the QUBO problem with the unique maximum. In turn, we consider $N = 1000$ qubits, which renders  any exact simulation based on dense state-vector representation intractable.

The numerical results are shown in Fig.~\ref{fig:largest_numerical_experiment}(a). We used a Trotterization time step $\delta t = 0.2$ and a Hamiltonian schedule given by $s(t) = 1 - \frac{t}{T}$, with different durations $T = 20, 40, 80, 160, 320$, and $640$. We used $\chi=32$ for $T \leq 160$ and $\chi=4$ for longer times to reduce computational cost. 
The runtime of the longest simulation (the one corresponding to $T = 160$ and $\chi = 32$) was close to $10$ days on a single CPU kernel.
The runtime of each heuristic from MQLib was upper-bounded by $100$ seconds, which is typically enough to get the best performance. SimCIM was run $1000$ times, and its best solution string among all 1000 runs was selected. As one can see in the figure, the GTQA algorithm performs very well. For $T=640$ it gives $E = 1433.3739$ which is very close to the highest value $1433.4906$ found by heuristics. 

To give some intuition of how each qubit converges to a particular polarized state within the GTQA algorithm, we plot the dynamics of $\langle Z\rangle$ for each qubit in Fig.~\ref{fig:largest_numerical_experiment}(b) for $T=640$. One can see that most of the qubits converged to polarized states by the end of the dynamics; in particular, $997$ qubits crossed a threshold $\langle Z\rangle = \pm 0.5$ to states close to eigenstates of $Z$ by the end of the dynamics. In particular, $997$ out of the $1000$ qubits end up in a reduced state featuring  $|\langle Z\rangle| \geq 0.5$. That is, the final $N$-qubit state is close to a pure product state in the computational basis. We also compared solution strings obtained by the GTQA algorithm for $T=640$ and by the best heuristic in Fig.~\ref{fig:circuit_tn}(b). One can note that these solution strings are very close to each other and differ only in $13$ spins. This is expected because the best solution string is unique due to random local magnetic fields.

The fact that the final state is close to a product state that encodes a %
good solution to the optimization problem %
is a strong evidence that, for $T=640$, GTQA simulates a nearly adiabatic QA process accurately. In Sec.~\ref{sec:sim_accuracy}, we also benchmark GTQA's performance in terms of state infidelities due to truncations and the mean-field approximation. 
But, before that, we study the evolution of entanglement during the simulated %
process.

\subsection{Entanglement entropy dynamics}
\label{subsec:apprx_entropy}
An interesting feature of GTQA is that, when the QA process does not generate long range correlations, one can approximately compute the entanglement entropy with respect to an arbitrary system bipartition using a mean-field approximation.
For non-degenerate problems with non-zero gap, entanglement entropy is useful for identifying the transition from non-equilibrium (quench) to adiabatic dynamics. Moreover, the amount of entanglement generated has a direct connection to the complexity of classical simulation.

Entanglement entropy is defined for a particular partitioning of a quantum system into two subsystems. Usually, to describe the global entanglement complexity of the quantum state, one chooses an equal-sized bipartition, where there is as little connection as possible between the partitions. For example, for a one-dimensional system, one partitions via the center of the system. 
However, for a random graph structure, it is a priori not clear how to choose the partition. Here, we introduce a heuristic to find a good partition to characterize the entanglement complexity of GTQA.
Let $A\subseteq \nodes$ be the qubits of the first subsystem, $\overline{A} = \nodes \setminus A$ be the second subsystem, and  $|\partial A|= \left\{\{a, b\}\in\edges|a\in A, b\in \overline{A}\right\}$ the number of interaction terms of the Hamiltonian connecting two subsystems.
The most natural requirement is to keep the partitioning balanced, i.e. $|A| \approx |\overline{A}|$. Further, we want to minimize the connectivity between the partitions to avoid contributions from  local entanglement. For example, let us imagine a system consisting of several Bell pairs. If we split the system such that for each pair its counterparts are in different subsystems, we will get the maximal entanglement entropy. If we split the system such that both counterparts of each pair belong to the same subsystem, we will get zero entanglement entropy. Clearly, from the point of view of many-body dynamics, zero entanglement is the option that reflects the genuine complexity of the system. 

Hence, we also need to minimize the entanglement entropy between subsystems while keeping the partitioning balanced. A good heuristic for that is to minimize the number of interaction terms between subsystems, i.e. $|\partial A|$. A partitioning that approximately satisfies both requirements can be found as follows
\begin{eqnarray}
	\label{eq:bipartition}
	A^* = \underset{A}{\argmin}\frac{|\partial A|}{|A||\overline{A}|},
\end{eqnarray}
where the denominator penalizes for the imbalance while the numerator is the number of the interaction terms that we want to minimize. We found the partitioning Eq.~\eqref{eq:bipartition} approximately using spectral graph partitioning~\cite{von2007tutorial}. It gave us a partitioning into two subsystems of sizes $556$ and $444$ qubits with number of interacting terms between subsystems $|\partial A^*| = 158$. We denote by $S^*$ the entanglement entropy with respect to this bipartition.

$S^*$ can be computed from the Schmidt coefficients relative to the bipartition $A^*$. These can be approximated in a natural way with the singular-value vector $\{\lambda_{ab}\}_{\{a, b\}\in\edges}$ given by the Vidal gauge. The approximate Schmidt coefficients are defined as
\begin{eqnarray}
    \label{eq:mean_field_schmidt}
    \lambda^*[j_{\partial A^*}] = \prod_{\{a, b\}\in \partial A^*}\lambda_{ab}[j_{ab}].
\end{eqnarray}
That is, this is a mean-field-like approximation, since it corresponds to approximating the Schmidt vectors of a bipartition as tensor products of  Schmidt vectors of independent tree tensor networks passing through edges cut by the bipartition.
 Using Eq.~\eqref{eq:mean_field_schmidt}, one can in turn approximate the entanglement entropy as
\begin{eqnarray}
    \label{eq:apprx_entropy}
	S^* &&\approx -\sum_{j_{\partial A^*}} \left(\lambda^*[j_{\partial A^*}]\right)^2 \log \left(\left(\lambda^*[j_{\partial A^*}]\right)^2\right)\nonumber\\ &&= -\sum_{\{a, b\}\in \partial A^*}\sum_{j_{ab}=1}^\chi \lambda_{ab}^2[j_{ab}]\log\left(\lambda_{ab}^2[j_{ab}]\right).
\end{eqnarray}
This approximation has also been used in Ref.~\cite{tindall2023efficient}.
We stress that Eq.~\eqref{eq:mean_field_schmidt} is approximate even for tree-graphs tensor networks. Hence, Eq.~\eqref{eq:apprx_entropy} can potentially accumulate errors from that approximation as well as from the intrinsic mean-field approximation of the BP procedure itself. However, in App.~\ref{appx:entropy_dynamics_cmp}, we observe a good qualitative agreement between the approximate entanglement entropy and its exact counterpart obtained via brute-force calculations for small random graphs. We also observe in our numerical experiments that the approximate entropy tends to overestimate the exact one.

\begin{figure*}[t!]
	\centering
	\includegraphics[width=1.\linewidth]{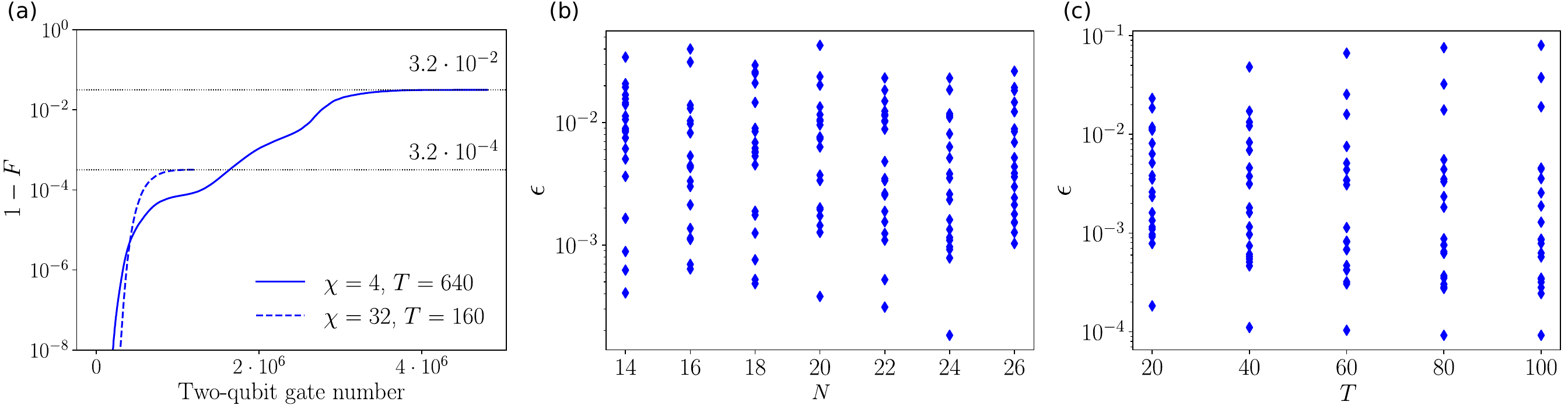}
	\caption{{\bf GTQA accuracy analysis for random 3-regular graph QUBO problems.}
	(a) Approximate dynamics of infidelity with growing number of two-qubit gates within the GTQA algorithm. Even for largest annealing time $T = 640$ and the smallest bond dimension $\chi = 4$, truncations do not have a large impact. (b) Trace-distance error of single-qubit reduced states averaged over the $N$ qubits [see Eq.~\eqref{eq:error}] and over time for $20$ random $3$-regular graphs per $N$ and for $N$ ranging from $14$ to $26$.
    The total annealing time is $T = 20$ for all instances.
    (c) Same type of errors as in panel (b) but for fixed $N=24$ and for different $T$ ranging from $20$ to $100$.
    Note that, interestingly, the median of the error does not grow with $N$ or $T$, only the dispersion of the error slightly grows with $T$.}
	\label{fig:accuracy}
\end{figure*}

We plot the dynamics of $S^*$ for $T = 160$ and $T=640$ in Fig.~\ref{fig:largest_numerical_experiment}(c). For $T=160$ the dynamics is still in the quench regime, since the final entropy is relatively high, implying 
that the final state is not a product state as one expects. But for $T=640$, the final entropy is close to zero, indicating that the dynamics is close to adiabatic, since the final state approximates a product state (that gives the solution of the optimization problem). The observed entropy behaviour is also compatible with the single-qubit $Z$ expectation values shown in Fig.~\ref{fig:largest_numerical_experiment}(b), which also indicate that the final state is close to a product state. This provides yet another confirmation of the accuracy of Eq. \eqref{eq:apprx_entropy} to approximate $S^*$ for the case in question. 

The maximal entanglement entropy achieved during the dynamics simulation for $T=640$ is $S^*_{\rm max} \approx 6.4$. 
With this, we can bound the bond dimension that an MPS would need for such simulation. 
Assuming a (highly unrealistic) flat Schmidt spectrum, one would require an MPS bond dimension at least $\chi=580$ 
to achieve the entanglement entropy $S^*_{\rm max}$.
However, the actual Schmidt spectrum is far from flat and has a long decaying tail. To accurately simulate such spectrum using an MPS, we would actually need a bond dimension many orders of magnitude higher, which is infeasible in practice. For instance, if we stack the graph TN that we use for $T=640$ into an MPS, we end up with an MPS bond dimension $4^{158}$, where $158$ comes from $|\partial A^*|$
and $4$ from the bond dimension of our graph TN. An MPS with such high bond dimension would allow us to simulate the QA process with an approximate infidelity equal to $0.032$ (see Sec.~\ref{sec:sim_accuracy}). Note however that $4^{158}$ is only an upper bound, since the corresponding MPS might be further truncated. But the argument makes the point that the necessary MPS bond dimension is actually much higher than 580. 
In contrast, for the graph tensor network Ansatz, the average maximal entanglement entropy per bond index is just $\langle S_{\rm max}^*\rangle \approx {S^*_{\rm max}}/{|\partial A^*|} = 0.04$. Interestingly, Fig.~\ref{fig:largest_numerical_experiment}(c) also shows that the peak entanglement entropy during the annealing does not grow from $T=160$ to $T=640$. This suggests that one could scale $T$ up even further without increasing 
the bond dimension ($\chi = 4$).

\subsection{Simulation accuracy analysis}
\label{sec:sim_accuracy}
To claim that the GTQA algorithm simulates the QA process accurately, we analyze two sources of error of the GTQA algorithm, i.e. the truncation error and the error caused by the mean-field-like approximation of the environment tensor Eq.~\eqref{eq:bp_approx}. We begin from the truncation error. In particular, we consider the truncation impact on the state fidelity. We rely on the assumption about multiplicativity of the fidelity~\cite{zhou2020limits}, i.e. the fidelity after applying $M$ gates can be approximated as follows
\begin{eqnarray}
	\label{eq:total_fidelity}
	F(M) \approx \prod_{i = 1}^{M} f_i,
\end{eqnarray}
where $f_i$ is the $i$-th gate fidelity. One-qubit gates always have fidelity $1$ within the GTQA algorithm, thus, we count only two-qubit gates. The fidelity of each two-qubit gate can be estimated as follows
\begin{eqnarray}
\label{eq:gate_fidelity}
	f \approx \left(\sum_{k=1}^{\chi}\lambda^2[k]\right)^2,
\end{eqnarray}
where $\lambda$ is the Schmidt vector of a particular edge that is being truncated after a two-qubit gate application. Eqs.~\eqref{eq:total_fidelity} and \eqref{eq:gate_fidelity} give us a numerically efficient way to estimate $F(M)$. For better visibility, we plot dynamics of infidelity $1 - F(M)$ in Fig.~\ref{fig:accuracy}(a). One can observe, that for $T=160$ and $\chi=32$ infidelity is negligibly small meaning that we can neglect the truncation impact on the accuracy. For $T=640$ and $\chi=4$ infidelity is notable, but still have minor impact.

Next, we analyze the impact of the mean-field-like approximation of the environment tensor Eq.~\eqref{eq:bp_approx} on the state fidelity. For this purpose, we compare GTQA algorithm with the exact state-vector-based dynamics simulation for small system sizes assuming that truncation does not introduce a sensible impact. As an accuracy metric, we consider trace distance averaged over qubits and time steps, i.e.,
\begin{eqnarray}
	\label{eq:error}
	\epsilon = \frac{\delta t}{2TN}\sum_{a=1}^N\sum_{k=1}^{T / \delta t} \left\|\varrho_a^{({\rm exact})}(k) - \varrho_a^{({\rm bp})}(k)\right\|_1,
\end{eqnarray}
where $\varrho_a(k)$ is the density matrix of the $a$-th qubit at discrete time step $k$. For $T=20$, $\chi=4$, and system sizes ranging from $14$ to $26$ qubits in steps of 2, we generated $20$ random $3$-regular graphs per system size ($140$ random graphs in total) and evaluated Eq.~\eqref{eq:error} for each graph instance. In Fig.~\ref{fig:accuracy}(b) we plotted $\epsilon$ for each graph. One can observe, that the error is at most a few percent for the worst graphs and it does not increase with increasing system size. In order to understand how error scales with increasing $T$, we plot Eq.~\eqref{eq:error} for $20$ random $3$-regular graphs consisting of $24$ qubits and various $T$ in Fig.~\ref{fig:accuracy}(c). As one can see, the standard deviation of the error slowly increases with increasing $T$, but its median does not increase.

\section{Quantum annealing simulation for MAXCUT}
\label{sec:MaxCut}
In this section we solve large MaxCut instances using our GTQA algorithm with quantum measurements simulation. In contrast with QUBO problem from Sec.~\ref{sec:QUBO} the MaxCut optimization problem has a highly degenerate maximum, and the quantum state after QA is not a product state, but an entangled superposition of many optimal solutions. 
To extract the solutions from the quantum state, we have to simulate quantum measurements in the computational basis. The section consists of two parts. First, we discuss the MaxCut problem and the GTQA algorithm equipped with quantum measurements simulation. Second, we benchmark the MaxCut solution found by the GTQA algorithm against solutions obtained with conventional solvers.
\subsection{Simulation algorithm}
\label{subsec:measurements_sampling}

MaxCut can be formulated as a maximization problem of the following objective function
\begin{eqnarray}
	\label{eq:maxcut_energy}
	E(\boldsymbol{x}) = \sum_{\{a, b\}\in \edges} \frac{1 - x_a x_b}{2},
\end{eqnarray}
where a spin variable $x_a$ is viewed as a label of a class that vertex $a$ belongs to and the objective-function value is the size of a cut, i.e., the number of edges connecting classes. Eq.~\eqref{eq:maxcut_energy} can be transformed to the equivalent one by a simple global shift and rescaling
\begin{eqnarray}
	\label{eq:maxcut_energy_shifted}
	E'(\boldsymbol{x}) = -\sum_{\{a, b\}\in \edges} x_a x_b,
\end{eqnarray}
which corresponds to choosing $J_{ab} = -1$ for $\forall \{a, b\} \in \edges$ and $h_a = 0$ for $\forall a \in \nodes$ in Eq.~\eqref{eq:objective_function}. Thus, we use the same GTQA algorithm as in Sec.~\ref{prgrph:large_scale_qa} with the final Hamiltonian
\begin{eqnarray}
	\label{eq:maxcut_qa_final_hamiltonian}
	H_{\rm Ising} = -\sum_{\{a, b\}\in\edges} Z_a Z_b,
\end{eqnarray}
encoding the MaxCut objective function Eq.~\eqref{eq:maxcut_energy_shifted}.

The final state $\ket{\Psi(T)}$ of the QA process is close to the ground state of Eq.~\eqref{eq:maxcut_qa_final_hamiltonian}. But the maximum of the MaxCut objective function is typically highly degenerate, meaning that $\ket{\Psi(T)}$ is the superposition of all the solution strings of the MaxCut objective function. Therefore, the simple rounding rule Eq.~\eqref{eq:rounding_rule} does not work anymore, since reduced density matrices could be proportional to the identity. Thus, we need to simulate quantum measurements in order to extract the solution string. It can be done in tree steps for each qubit $a$: (\emph{i}) compute the partial density matrix $\varrho_a$ using messages and sample a measurement outcome $x_a\in\{-1, 1\}$ from the probability mass function
\begin{figure}
	\centering
	\includegraphics[width=\linewidth]{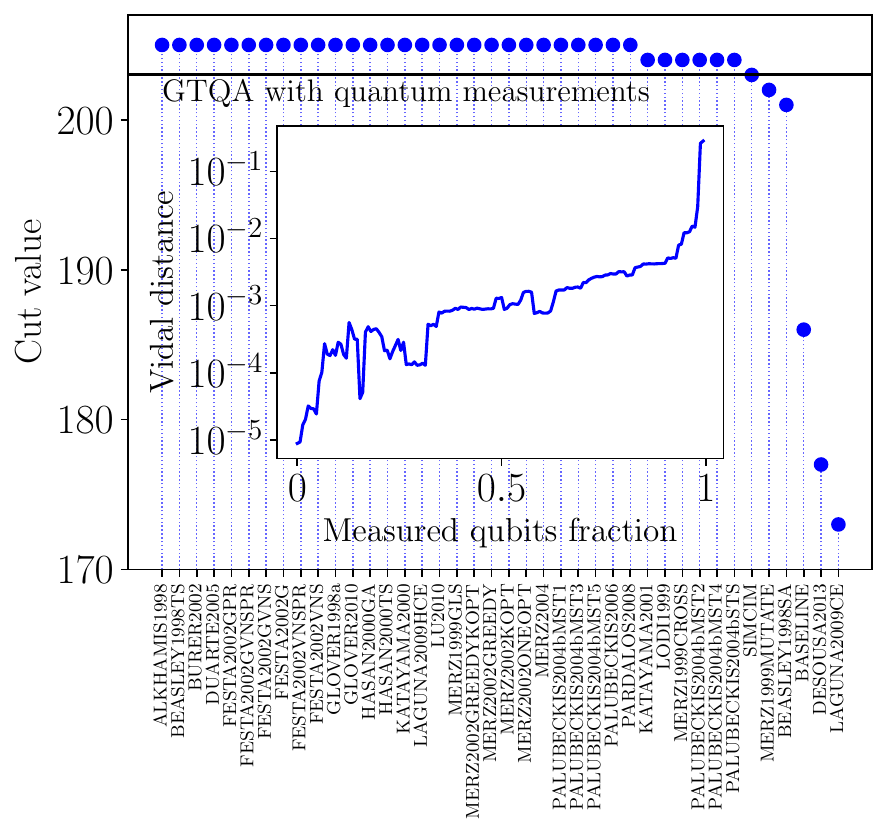}
	\caption{{\bf GTQA's performance on MaxCut.}
 The main figure compares the cut values found by GTQA with those found by $40$ MQLib heuristics and SimCIM, for a random 3-regular graph of $N=150$ vertices. Higher cut values correspond to better solutions. All the benchmark heuristics are ordered on the horizontal axis by decreasing cut value. For GTQA, we use a total annealing time $T = 40$ and bond dimension $\chi = 32$. The cut value found by GTQA is $203$ while the highest cut value among heuristics is $205$. This gives GTQA's solution an estimated approximation ratio $\alpha = 203/205 > 0.99$, remarkably. The inset shows the Vidal distance after regauging (during the measurement simulation) as a function of the fraction of qubits measured.
 The growth in Vidal distance is attributed to the fact that our sampling primitive's accuracy deteriorates close to the last sampled bit of the string. However, brute-force calculations indicate that the obtained MaxCut solution is unaffected by the observed final jump in distance (see main text).}
	\label{fig:maxcut_benchmark}
\end{figure}
\begin{equation}
	p_a(x_a) = \begin{cases}
		\varrho_a[0, 0], \ \text{if} \ x_a = 1,\\
		\varrho_a[1, 1], \ \text{if} \ x_a = -1;
	\end{cases}
\end{equation}
(\emph{ii}) to get a post measurement state, update the vertex $a$ tensor as follows
\begin{eqnarray}
	\label{eq:projection}
	\tilde{\Gamma}_{a}[i_a, \bfj_{\partial a}] = \sum_{i'_a}\pi_{x_a}[i_a, i'_a]\Gamma_{a}[i'_a, \bfj_{\partial a}],
\end{eqnarray}
where $\pi_{x_a}=\ket{x_a}\bra{x_a}$ is the orthogonal projection operator which corresponds to the measurement outcome $x_a$; (\emph{iii}) perform regauging of the tensor network and recompute messages. The step (\emph{iii}) is necessary since the update Eq.~\eqref{eq:projection} heavily breaks the gauge of the tensor network. Measurements sampling dramatically slows down the overall simulation since one needs to perform complete regauging after each qubit measurement: $N$ regaugings in total. It also affects the accuracy of the final result since BP does not always fully converge after a measurement sampling.

\subsection{Benchmarking against conventional MaxCut solvers}
\label{sec:sampling}
As in Sec.~\ref{prgrph:qubo_benchmarking}, we consider a random $3$-regular graph, but with a smaller number of qubits $N=150$ due to the increased computational complexity required to sample measurement outcomes. Here, we take $\chi=32$, $T=40$, $\delta t = 0.2$, and $s(t) = 1 - \frac{t}{T}$.
The maximal number of BP iterations is set to $K = 100$. Potential convergence failure requires an upper bound on the number of BP iterations in order not to go into an infinite loop.

We benchmark GTQA's performance on MaxCut against those of all 40 solvers from MQLib and SimCIM with fine-tuned hyperparameters. The comparison of cut values obtained by different methods is given in Fig.~\ref{fig:maxcut_benchmark}.
The highest cut value obtained by the best heuristic is $205$, while GTQA's cut value is $203$. This corresponds to an estimated approximation ratio for GTQA's solution of $\alpha=\frac{203}{205} > 0.99$, remarkably. This small discrepancy can be attributed to the mean-field error, because longer annealing time $T$ does not improve the accuracy. This is why we consider only a single value of $T$ here. 

As explained in the end of Sec.~\ref{subsec:measurements_sampling}, we rerun BP after each qubit measurement to re-gauge the TN back into the canonical form. Hence, the accuracy of the measurement-sampling primitive deteriorates due to non-perfect BP convergence as one approaches the last bit to sample in the bit-string. In fact, this is actually a well-known issue in BP-guided sampling~\cite{mezard2009information}.
In the inset of Fig.~\ref{fig:maxcut_benchmark}, we plot how this affects the Vidal distance of our TN during the sampling process. One can see that the Vidal distance increases towards the last measurement, with a particularly high jump in the very end. However, we verified that this jump does not affect the approximation ratio the produced MaxCut solution, interestingly. 
To see this, we also sampled 136 bits with our measurement-sampling primitive but chose the last 14 ones (where Vidal distance goes above $5\times 10^{-3}$) via brute-force maximization instead of sampling them. We did not observe any improvement in cut value over the case where all 150 bits are produced via BP-guided sampling. Moreover, we note also that BP convergence may be greatly improved by generalizing more advanced BP-like approaches, such as for instance tree-reweighed BP~\cite{wainwright2003tree, wainwright2005new}, to tensor networks.

\section{When do we expect GTQA to perform well?}
\label{sec:why_gtqa_works}

\begin{figure}[b!]
	\centering
	\includegraphics[width=1.\linewidth]{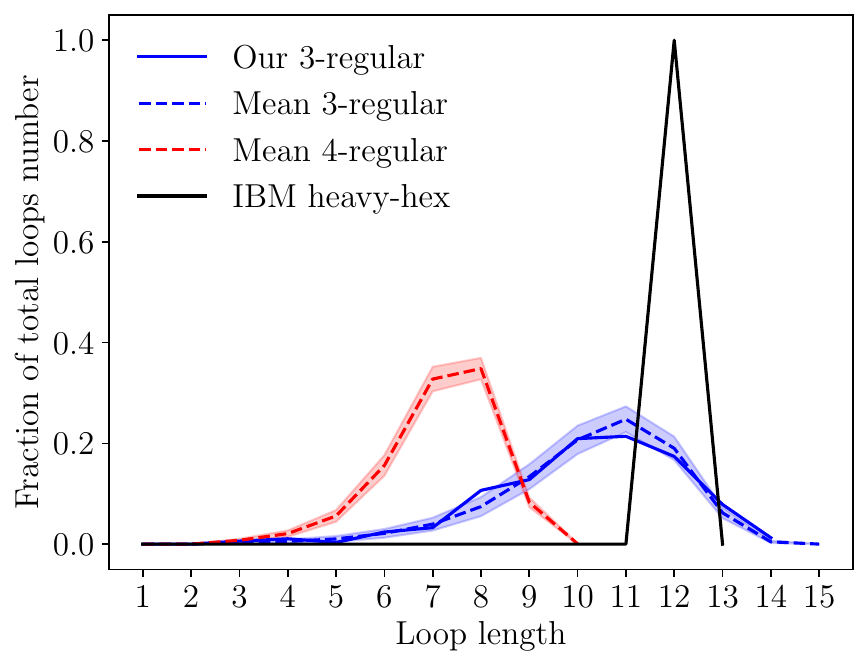}
	\caption{%
 {\bf Distribution of shortest loop lengths for different graphs.} The solid blue line corresponds to the 1000-vertex graph of Fig. \ref{fig:circuit_tn} (b). The dashed blue line is the average over $100$ random $3$-regular graphs with $1000$ vertices, and blue shaded area around it depicts the standard deviation. The dashed red line is the average over $100$ random $4$-regular graphs with $1000$ vertices, and the red shaded area around it depicts the standard deviation. The black curve corresponds to IBM's heavy-hex lattice with $127$ vertices~\cite{kim2023evidence}, where we do not count edges that have no loops passing through them.}
	\label{fig:loops_distribution}
\end{figure}
The BP algorithm works exactly on tree graphs and is known to show a high performance on problems where the underlying geometry is close to that of a tree graph, in the sense of there being loops but the loops being large. Thus, we expect optimization problems on any graph with only large loops to be efficiently simulable by GTQA. For a more quantitative intuition, we study the distribution of loop lengths for three exemplary types of graphs $G$: random $3$- and 4-regular graphs of 1000 vertices, and  IBM's quantum processor's heavy-hex lattice of $127$ vertices. The results are shown in Fig.~\ref{fig:loops_distribution}. For each edge in $G$, we compute the length of the shortest loop passing through the edge using the breadth-first search algorithm. As we can see in the plot, for the heavy-hex circuit, the shortest loop has length $12$~\cite{kim2023evidence}. This is relatively large compared to the other graphs studied, which is compatible with previous studies indicating that this geometry is simulable by BP based methods~\cite{tindall2023efficient, beguvsic2023fast}. In turn, random $3$-regular graphs are also dominated by relatively large loops. In contrast, random 4-regular graphs exhibit significantly shorter loops and are hence expected to be typically challenging for GTQA.

With the same reasoning, for random $d$-regular graphs of higher order, we expect the typical complexity of the classical simulation to grow with $d$, until a certain point where it is expected to drop again.
In fact, for very large $d$ (of the order of $N$, e.g.) we actually expect GTQA to exhibit a good performance again, even though the corresponding loops are very short. This is because BP-based tensor-network methods can be seen as a higher-order mean-field approximation~\cite{alkabetz2021tensor}, which is expected to work well for highly connected graphs.
This suggests that the search for quantum advantage should avoid problems with either too low or too high connectivity.

\section{Discussion}\label{sec:discussion}

In short, we introduced the {\it graph tensor-network quantum annealer} (GTQA), a powerful classical toolkit for high-precision simulations of Trotterized circuits of near-adiabatic evolutions on low-connectivity graphs. 
We showcased GTQA on random 3-regular graph QUBO problems of up to 1000 qubits, which required the simulation of circuits with up to $4.8$ million two-qubit gates in all-to-all connectivity. In addition, we took advantage of these simulations to study the dynamics of entanglement during a large-scale quantum annealing process.
Moreover, we introduced a measurement-simulation primitive for graph tensor-network states, based on BP-guided sampling. We applied this to solve highly-degenerate MaxCut instances on random 3-regular graphs of up to 150 vertices.
For both the QUBO and MaxCut cases, we demonstrated that GTQA's solutions are competitive with those of state-of-the-art classical solvers. This is remarkable because these instances correspond to highly-unstructured combinatorial optimization problems, known to be NP-hard in the worst case.

Apart from sampling measurement outcomes, the measurement-simulation primitive we introduced can also estimate the corresponding probabilities, which makes it interesting in its own right.
For example, our methods can be relevant to classical shadows~\cite{huang2020predicting} through projected tomographic ensembles~\cite{mcginley2023shadow,tran2023measuring,hu2023classical}. There, a scrambling (quench) unitary dynamics followed by a computational-basis measurement simulates a random measurement on a sub-system, which is used to learn properties of it.
The latter requires a suitable classical post-processing step that relies on projection probabilities similar to those we compute in Sec.~\ref{sec:MaxCut}. Hence, such task may be tackled with our primitive; and this has the potential to significantly broaden the scope of platforms amenable to 
projected-ensemble methods can be applied, including IBM's heavy-hex circuits. Another example is in linear cross-entropy benchmarking~\cite{boixo2018characterizing,arute2019quantum}, which quantifies the performance of noisy quantum computations.
This involves sampling measurement outcomes from the experimental device and numerically computing the probabilities of the observed outcomes.
With our method, one may extend cross-entropy benchmarking to QA or any other circuits simulatable with graph TNs.

On a different note, computing entanglement is important in QA~\cite{bauer2015entanglement,hauke2015probing}, as it can indicate whether the protocol converges or not. For non-degenerate problems, the final state is separable in the adiabatic limit while entangled for non-adiabatic protocols~\cite{hauke2015probing}. 
In fact, studying entanglement dynamics may improve annealing schedules beyond simple linear ones~\cite{roland2002quantum,yan2022analytical,mc2024towards}. Our solver can help explore the link between entanglement and quantum speedups~\cite{susa2017relation,albash2018adiabatic}, and clarify the benefits of QA over classical methods, especially for large problem sizes, where polynomial speedups might appear~\cite{rajak2023quantum,liu2015quantum}. Furthermore, our methods could be applied to investigate different quantum simulation schemes.
For example, here we use a first-order Trotter product formula~\cite{lloyd1996universal}. Higher-order Suzuki-Trotter terms require higher circuit cost per time step $\delta t$, but also allow for larger $\delta t$~\cite{heyl2019quantum,childs2021theory}.
Numerical studies of this trade-off has been limited to small systems or highly-structured problems, due to the difficulty of classical simulation.
With GTQA, one could explore higher-order expansions~\cite{heyl2019quantum,childs2021theory} as well as randomized product formulas~\cite{childs2019faster,PhysRevLett.123.070503} for large system sizes and deep circuits.

Finally, we stress that GTQA provides solutions for hard optimization problems which are competitive with those of state-of-the-art solvers. However, in contrast to fully-classical heuristics like simulated annealing, GTQA simulates the quantum state of the QA algorithm. As such, GTQA can potentially harness quantum effects such as entanglement and quantum tunneling to escape local minima~\cite{bauer2015entanglement, layden2023quantum}. Hence, it provides a powerful framework for quantum-inspired solvers that brings in novel approaches to tackle optimization problems. In this regard, we emphasize that we have not attempted to optimize the computational runtime of GTQA, but we expect it to be amenable to significant improvements. For example, message updates may be executed in parallel on a GPU~\cite{tomislav2024fast}. Additionally, one can trade %
state-simulation accuracy for runtime reduction (while still aiming at high-quality solutions of the optimization problem), for instance by combining GTQA  with imaginary time evolution~\cite{patra2024projected}.
To end up with, as for quantum optimization solvers, GTQA %
considerably raises the bar required
for experimental demonstrations of quantum speedups; highlighting the need of high-connectivity unstructured problems for a hopeful route towards quantum advantages in combinatorial optimizations.

\bibliography{bibliography.bib}

\onecolumngrid
\appendix

\section{Recurrent equation for messages}
\label{appx:self_consistency_derivation}
In this section we derive Eq.~\eqref{eq:local_consistency}. First, we note that $m_{b\rightarrow a}$ is an environment tensor of a sub-tree of a tensor tree which is linked with the root $a$ by the edge $\{a, b\}$. Thus, it can be written as
\begin{eqnarray}
	\label{eq:full_message}
	m_{b\rightarrow a}[j_{ba}, j'_{ba}] = \sum_{\bfj_{\edges_{b\rightarrow a}}}\sum_{\bfj'_{\edges_{b\rightarrow a}}}\sum_{\bfi_{\nodes_{b\rightarrow a}}}\prod_{c\in \nodes_{b\rightarrow a}} T_c[i_c,\bfj_{\partial c}]T^*_c[i_c,\bfj'_{\partial c}],
\end{eqnarray}
where $\nodes_{b \rightarrow a}$ is the set of vertices of the sub-tree linked to $a$ by the edge $\{b, a\}$, $\bfj_{\edges_{b \rightarrow a}}$ is the set of bond indices of the same sub-tree, and $\bfi_{\nodes_{b \rightarrow a}}$ is the set of physical indices of the same sub-tree. Let us rewrite Eq.~\eqref{eq:full_message} using the recursive structure of the sub-tree
\begin{eqnarray}
	m_{b\rightarrow a}[j_{ba}, j'_{ba}] = \sum_{\bfj_{\edges_{b\rightarrow a}}}\sum_{\bfj'_{\edges_{b\rightarrow a}}}\sum_{\bfi_{\nodes_{b\rightarrow a}}}T_b\left[i_b, \bfj_{\partial b}\right]T_b^*\left[i_b, \bfj'_{\partial b}\right]\prod_{c\in \partial b \setminus a}\prod_{d\in \nodes_{c\rightarrow b}} T_d[i_d,\bfj_{\partial d}]T^*_d[i_d,\bfj'_{\partial d}],
\end{eqnarray}
where $\partial b \setminus a$ is the set of all vertices neighboring to $b$ except $a$. Now let us use the relation $xy + xz = x(y + z)$ that allows us to ``propagate'' some of the sums deeper in the equation
\begin{eqnarray}
	\label{eq:recursive_message}
	m_{b\rightarrow a}[j_{ba}, j'_{ba}] = \sum_{j_{\partial b\setminus a}}\sum_{i_b}T_b\left[i_b, \bfj_{\partial b}\right]T_b^*\left[i_b, \bfj'_{\partial b}\right]\prod_{c\in \partial b \setminus a}\left(\sum_{\bfj_{\edges_{c\rightarrow b}}}\sum_{\bfj'_{\edges_{c\rightarrow b}}}\sum_{\bfi_{\nodes_{c\rightarrow b}}}\prod_{d\in \nodes_{c\rightarrow b}} T_d[i_d,\bfj_{\partial d}]T^*_d[i_d,\bfj'_{\partial d}]\right),
\end{eqnarray}
where $\bfj_{\partial b\setminus a} = \{j_{cb} \mid c \in \partial b \setminus a\}$. Finally, let us note that the part of Eq.~\eqref{eq:recursive_message} that is in brackets is another message, this brings us to the final form of the recursive equation for messages
\begin{eqnarray}
	\label{eq:local_consistency_appx}
	m_{b\rightarrow a}[j_{ba},j'_{ba}] = \sum_{\bfj_{\partial b\setminus a}}\sum_{i_b} T_b\left[i_b, \bfj_{\partial b}\right]T_b^*\left[i_b, \bfj'_{\partial b}\right]\prod_{c\in \partial b\setminus a} m_{c\rightarrow b}[j_{cb},j'_{cb}].
\end{eqnarray}

\section{Belief propagation algorithm}
\label{appx:bp_alg}
Here we discuss Algorithm~\ref{alg:bp} known as BP algorithm in details, which is the fixed-point iteration method applied to Eq.~\eqref{eq:local_consistency}. For better stability we solve Eq.~\eqref{eq:local_consistency} up to normalization constants enforcing $\tr(m_{a\rightarrow b}) = 1$, which together with Hermiticity and positivity of $m_{a\rightarrow b}$ allows us to treat messages as density matrices. Since one can always recover the normalization constant of reduced density matrices that are computed from messages by enforcing unit trace, the norm of messages does not play an important role. Note that for a general graph, Eq.~\eqref{eq:local_consistency} could have multiple solutions; we assume that finding any solution is enough for us.

\begin{algorithm}[H]
	\caption{Belief propagation for graph tensor networks}\label{alg:bp}
	\begin{algorithmic}
		\Require Connectivity graph $G$, tensors $\{T_a\}_{a=1}^N$, bond dimensions $\{d_{ab}|\{a, b\}\in \edges\}$, accuracy threshold $\varepsilon$, maximal number of BP iterations $K$
		\Ensure Set of $2|\edges|$ messages $\{m_{a\rightarrow b}, \ m_{b\rightarrow a}\}_{\{a, b\}\in \edges}$ giving approximation of any tensor's environment
		\For{$\{a, b\}\in\edges$} \Comment{Loop initializing messages}
		\State $m_{a\rightarrow b}=\text{sample a random density matrix of size} \ d_{ab}\times d_{ab}$ \Comment{Computed as $\frac{AA^\dagger}{\tr\left(AA^\dagger\right)}$ where $A$ is random}
		\State $m_{b\rightarrow a}=\text{sample a random density matrix of size} \ d_{ab}\times d_{ab}$
		\EndFor
		\For{$i \in \{1,\dots,K\}$} \Comment{Loop running at most $K$ iterations of BP}
		\State ${\rm dist} = 0$ \Comment{Initialization of the aggregated distance between old and new messages}
		\For{$\{a, b\}\in\edges$} \Comment{Loop updating messages}
		\State $m^{({\rm new})}_{a\rightarrow b}[j_{ba},j'_{ba}] = \sum_{\bfj_{\partial a\setminus b}}\sum_{i_a} T_a\left[i_a, \bfj_{\partial a}\right]T_a^*\left[i_a, \bfj'_{\partial a}\right]\prod_{c\in \partial a\setminus b} m_{c\rightarrow a}[j_{ca},j'_{ca}]$ \Comment{Eq.~\eqref{eq:local_consistency}}
		\State $m^{({\rm new})}_{b\rightarrow a}[j_{ba},j'_{ba}] = \sum_{\bfj_{\partial b\setminus a}}\sum_{i_b} T_b\left[i_b, \bfj_{\partial b}\right]T_b^*\left[i_b, \bfj'_{\partial b}\right]\prod_{c\in \partial b\setminus a} m_{c\rightarrow b}[j_{cb},j'_{cb}]$
		\State $m^{({\rm new})}_{a\rightarrow b} \gets \frac{m^{({\rm new})}_{a\rightarrow b}}{\tr\left(m^{({\rm new})}_{a\rightarrow b}\right)}$ \Comment{Enforcing unit trace condition}
		\State $m^{({\rm new})}_{b\rightarrow a} \gets \frac{m^{({\rm new})}_{b\rightarrow a}}{\tr\left(m^{({\rm new})}_{b\rightarrow a}\right)}$
		\State ${\rm dist} \gets {\rm dist} + \left\|m^{({\rm new})}_{a\rightarrow b} - m_{a\rightarrow b}\right\|_1$ \Comment{Aggregating the distance between new and old messages}
		\State ${\rm dist} \gets {\rm dist} + \left\|m^{({\rm new})}_{b\rightarrow a} - m_{b\rightarrow a}\right\|_1$
		\State $m_{a\rightarrow b}\gets m^{({\rm new})}_{a\rightarrow b}$ \Comment{Replacing an old message with the new one}
		\State $m_{b\rightarrow a}\gets m^{({\rm new})}_{b\rightarrow a}$
		\EndFor
		\If{$\frac{\rm dist}{|\edges|} < \varepsilon$} \Comment{Exiting the BP loop if aggregated average distance is less than the accuracy threshold}
		\State \textbf{break}
		\EndIf
		\EndFor
	\end{algorithmic}
\end{algorithm}

Note that this algorithm has two hyperparameters: the maximal allowed discrepancy between messages from subsequent iterations (accuracy threshold) $\varepsilon$, which defines stopping criteria, and the maximal number of BP iterations $K$. Since the BP algorithm does not have convergence guarantees and sometimes it falls into an infinite cycle, one needs $K$ to prevent infinite loops during runtime. However, in practice, problems with convergence appear only when one uses the BP algorithm for measurements sampling. It happens due to the starting point of the BP algorithm in this case. Each measurement heavily breaks the Vidal gauge and to recover the Vidal gauge back, one needs to perform a lot of BP iterations, increasing the probability to fall into an infinite loop. See the inset of Fig.~\ref{fig:maxcut_benchmark} where we show how the Vidal distance, for which BP algorithm falls into an infinite loop, evolves with the number of measured qubits.

\section{Local orthogonality and its residual}
\label{appx:local_orthogonality}
To guarantee that Eq.~\eqref{eq:vidal_gauge_tn} is the Schmidt decomposition, one has to enforce the local orthogonality condition which holds for all directions $a\rightarrow b$:
\begin{eqnarray}
	\label{eq:local_orthogonality}
	\delta[j_{ba}, j'_{ba}] = \sum_{\bfj_{\partial a\setminus b}}\sum_{i_a}\Gamma_a[i_a, \bfj_{\partial a\setminus b}, j_{ba}]\Gamma^*_a[i_a, \bfj_{\partial a\setminus b}, j'_{ba}]\prod_{c\in \partial a\setminus b}\lambda^2_{ca}[j_{ca}],
\end{eqnarray}
where $\delta$ is the Kronecker symbol. For graphical representation see Fig.~\ref{fig:local_orthogonality}(a).
\begin{lemma}
	Eq.~\eqref{eq:vidal_gauge_tn} is the simultaneous Schmidt decomposition with respect to each single edge cut of a tree $G$ iff Eq.~\eqref{eq:local_orthogonality} holds.
\end{lemma}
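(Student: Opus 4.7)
The plan is to establish both implications by induction on the structure of the tree, using the fact that every edge of a tree $G$ cleanly bipartitions $\nodes$ into two sub-trees. Throughout, fix an edge $\{a,b\}\in\edges$, let $A\subset\nodes$ be the vertices of the sub-tree containing $a$ after removing $\{a,b\}$, and let $B=\nodes\setminus A$. Define the ``half-contractions''
\begin{equation}
U_{j_{ba}}[\bfi_A]=\sum_{\bfj_{\edges_A}}\prod_{a'\in A}\Gamma_{a'}[i_{a'},\bfj_{\partial a'}]\prod_{\{c,d\}\in\edges_A}\lambda_{cd}[j_{cd}],
\end{equation}
and analogously $V_{j_{ba}}[\bfi_B]$, where $j_{ba}$ is left open. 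Then Eq.~\eqref{eq:vidal_gauge_tn} reads $\Psi[\bfi_\nodes]=\sum_{j_{ba}}\lambda_{ba}[j_{ba}]\,U_{j_{ba}}[\bfi_A]\,V_{j_{ba}}[\bfi_B]$, so the claim ``simultaneous Schmidt decomposition'' across every cut is equivalent to $\sum_{\bfi_A} U_{j_{ba}}[\bfi_A]U^*_{j'_{ba}}[\bfi_A]=\delta[j_{ba},j'_{ba}]$ for every edge and direction (orthonormality on $B$ then follows by symmetry, and non-negativity of $\lambda_{ba}$ is inherited from the Vidal-gauge definition given in Sec.~\ref{subsec:vidal_gauge}).

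For the implication Eq.~\eqref{eq:local_orthogonality}$\Rightarrow$Schmidt, I would prove by induction on $|A|$ that $\sum_{\bfi_A} U_{j_{ba}}U^*_{j'_{ba}}=\delta[j_{ba},j'_{ba}]$. The base case $|A|=1$ is identical to Eq.~\eqref{eq:local_orthogonality} applied at $a$ in direction $a\to b$, with the product over $\partial a\setminus b$ being empty. For the inductive step, split $A$ at $a$: its branches are sub-trees rooted at the neighbors $c\in\partial a\setminus b$, joined to $a$ through edges $\{c,a\}$. By the inductive hypothesis applied in each branch, the corresponding half-contractions satisfy orthonormality with a $\lambda_{ca}$ factor absorbed. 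Substituting these partial contractions into $U_{j_{ba}}$ collapses every branch into a Kronecker $\delta[j_{ca},j'_{ca}]$ weighted by $\lambda^2_{ca}[j_{ca}]$, and what remains is precisely the right-hand side of Eq.~\eqref{eq:local_orthogonality} at $a$ in direction $a\to b$, which equals $\delta[j_{ba},j'_{ba}]$ by assumption.

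For the converse Schmidt$\Rightarrow$Eq.~\eqref{eq:local_orthogonality}, I would invert the same argument. Fix a vertex $a$ and an outgoing direction $a\to b$. By the Schmidt-decomposition assumption across each branch edge $\{c,a\}$ with $c\in\partial a\setminus b$, the half-contraction on the sub-tree beyond $c$ is orthonormal. Multiplying $U_{j_{ba}}U^*_{j'_{ba}}$ out and contracting physical indices branch by branch reduces every sub-tree factor to a $\delta[j_{ca},j'_{ca}]\,\lambda^2_{ca}[j_{ca}]$ block, leaving an expression local to $a$: the sum over $i_a$ and $\bfj_{\partial a\setminus b}$ of $\Gamma_a\Gamma_a^*$ weighted by $\prod_{c\in\partial a\setminus b}\lambda^2_{ca}$. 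Setting this equal to $\delta[j_{ba},j'_{ba}]$ (which holds by the Schmidt orthogonality across $\{a,b\}$) yields exactly Eq.~\eqref{eq:local_orthogonality}.

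The main technical obstacle is the bookkeeping in the inductive step: one must verify that when the Schmidt coefficients are absorbed into the ``outgoing'' half-contractions $U_{j_{ba}}$, the squared coefficients $\lambda_{ca}^2$ accumulate on exactly the branches $c\in\partial a\setminus b$ and not on the edge $\{a,b\}$ itself, matching the precise combinatorial form of Eq.~\eqref{eq:local_orthogonality}. This hinges on the tree property that edges partition, so no $\lambda_{ca}$ appears in both halves of the cut across $\{a,b\}$, and it is why the lemma does not extend to graphs with loops.
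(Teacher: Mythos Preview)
Your proposal is correct and follows essentially the same approach as the paper: both define the Schmidt ``half-contractions'' (the paper calls them $u_{a\rightarrow b}$), derive a recursive relation for their Gram matrix over the tree, and run an induction from the leaves for the ``if'' direction while substituting the assumed orthogonality into the recursion for the ``only if'' direction. The paper makes the recursion explicit as an identity for $G_{a\rightarrow b}[j_{ab},j'_{ab}]$, whereas you describe the same collapse-to-$\delta[j_{ca},j'_{ca}]\lambda_{ca}^2$ in words, but the logical content is identical.
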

\begin{proof}
We start by identifying a recurrent relation between Schmidt vectors. Let us consider a set of Schmidt vectors which corresponds to a particular sub-tree
\begin{eqnarray}
	\label{eq:schmidt_vecs}
	u_{a \rightarrow b}[\bfi_{\nodes_{a\rightarrow b}}, j_{ab}] = \sum_{\bfj_{\edges_{a\rightarrow b}}}\left(\prod_{c\in\nodes_{a\rightarrow b}}\Gamma_c[i_c,\bfj_{\partial c}]\right)\left(\prod_{\{d, e\}\in \edges_{a\rightarrow b}}\lambda_{de}[j_{de}]\right),
\end{eqnarray}
where $\nodes_{a \rightarrow b}$ is the set of vertices of the sub-tree linked to $b$ by the edge $\{a, b\}$, $\edges_{a\rightarrow b}$ is the set of edges of the same sub-tree, $\bfj_{\edges_{a \rightarrow b}}$ is the set of bond indices of the same sub-tree, $\bfi_{\nodes_{a \rightarrow b}}$ is the set of physical indices of the same sub-tree, $j_{ab}$ enumerates Schmidt vectors and $\bfi_{\nodes_{a\rightarrow b}}$ enumerates elements in a Schmidt vector. Let us rewrite Eq.~\eqref{eq:schmidt_vecs} using recursive structure of the tree
\begin{eqnarray}
	u_{a\rightarrow b}[\bfi_{\nodes_{a\rightarrow b}}, j_{ab}] = \sum_{\bfj_{\edges_{a\rightarrow b}}}\Gamma_a[i_a,\bfj_{\partial a}]\prod_{f\in \partial a\setminus b}\lambda_{fa}[j_{fa}]\left(\prod_{c\in\nodes_{f\rightarrow a}}\Gamma_c[i_c,\bfj_{\partial c}]\right)\left(\prod_{\{d, e\}\in \edges_{f\rightarrow a}}\lambda_{de}[j_{de}]\right).
\end{eqnarray}
Now let us use the relation $xy + xz = x(y + z)$ that allows us to ``propagate'' some of the sums deeper in the equation
\begin{eqnarray}
	u_{a\rightarrow b}[\bfi_{\nodes_{a\rightarrow b}}, j_{ab}] = \sum_{\bfj_{\partial a\setminus b}}\Gamma_a[i_a,\bfj_{\partial a}]\prod_{f\in \partial a\setminus b}\lambda_{fa}[j_{fa}]\left[\sum_{\bfj_{\edges_{f\rightarrow a}}}\left(\prod_{c\in\nodes_{f\rightarrow a}}\Gamma_c[i_c,\bfj_{\partial c}]\right)\left(\prod_{\{d, e\}\in \edges_{f\rightarrow a}}\lambda_{de}[j_{de}]\right)\right].
\end{eqnarray}
Now one can identify another Schmidt vector in square brackets and substitute it getting the following equation
\begin{eqnarray}
	u_{a\rightarrow b}[\bfi_{\nodes_{a\rightarrow b}}, j_{ab}] = \sum_{\bfj_{\partial a\setminus b}}\Gamma_a[i_a,\bfj_{\partial a}]\prod_{f\in \partial a\setminus b}\lambda_{fa}[j_{fa}]u_{f\rightarrow a}[\bfi_{\nodes_{f\rightarrow a}}, j_{fa}].
\end{eqnarray}
Similar recurrent relation can be obtained for the scalar product of Schmidt vectors
\begin{eqnarray}
	\label{eq:recurent_scalar_prod}
	&&G_{a\rightarrow b}[j_{ab}, j'_{ab}] = \sum_{\bfi_{\nodes_{a\rightarrow b}}}u_{a\rightarrow b}[\bfi_{\nodes_{a\rightarrow b}}, j_{ab}]u^*_{a\rightarrow b}[\bfi_{\nodes_{a\rightarrow b}}, j'_{ab}]\nonumber \\ &&= \sum_{\bfj_{\partial a\setminus b}}\sum_{\bfj'_{\partial a\setminus b}}\sum_{i_a}\Gamma_a[i_a,\bfj_{\partial a}]\Gamma^*_a[i_a,\bfj'_{\partial a}]\prod_{f\in \partial a\setminus b}\lambda_{fa}[j_{fa}]\lambda_{fa}[j'_{fa}]\sum_{\bfi_{\nodes_{f\rightarrow a}}}u_{f\rightarrow a}[\bfi_{\nodes_{f\rightarrow a}}, j_{fa}]u^*_{f\rightarrow a}[\bfi_{\nodes_{f\rightarrow a}}, j'_{fa}]\nonumber \\
	&&= \sum_{\bfj_{\partial a\setminus b}}\sum_{\bfj'_{\partial a\setminus b}}\sum_{i_a}\Gamma_a[i_a,\bfj_{\partial a}]\Gamma^*_a[i_a,\bfj'_{\partial a}]\prod_{f\in \partial a\setminus b}\lambda_{fa}[j_{fa}]\lambda_{fa}[j'_{fa}]G_{f\rightarrow a}[j_{fa}, j'_{fa}].
\end{eqnarray}
Let us now prove the ``only if'' statement that implies orthogonality of Schmidt vectors, i.e. $G_{a\rightarrow b}[j_{ab}, j'_{ab}] = \delta[j_{ab}, j'_{ab}]$ for all directions $a\rightarrow b$. By substituting this to the last line of Eq.~\eqref{eq:recurent_scalar_prod} we get
\begin{eqnarray}
	\delta[j_{ab}, j'_{ab}] &&= \sum_{\bfj_{\partial a\setminus b}}\sum_{\bfj'_{\partial a\setminus b}}\sum_{i_a}\Gamma_a[i_a,\bfj_{\partial a}]\Gamma^*_a[i_a,\bfj'_{\partial a}]\prod_{f\in \partial a\setminus b}\lambda_{fa}[j_{fa}]\lambda_{fa}[j'_{fa}]\delta[j_{fa}, j'_{fa}]\nonumber\\
	&&= \sum_{\bfj_{\partial a\setminus b}}\sum_{i_a}\Gamma_a[i_a,\bfj_{\partial a}, j_{ab}]\Gamma^*_a[i_a,\bfj'_{\partial a},j'_{ab}]\prod_{f\in \partial a\setminus b}\lambda^2_{fa}[j_{fa}],
\end{eqnarray}
which is Eq.~\eqref{eq:local_orthogonality}. To prove the ``if'' statement we use induction. If $G_{f\rightarrow a}[j_{fa}, j'_{fa}] = \delta[j_{fa}, j'_{fa}]$ for all $f\in\partial a\setminus b$ then $G_{a\rightarrow b}[j_{ab}, j'_{ab}] = \delta[j_{ab}, j'_{ab}]$ due to Eq.~\eqref{eq:local_orthogonality} and Eq.~\eqref{eq:recurent_scalar_prod}. This statement serves as the induction step. If $a$ is a leaf of a tree, one has $G_{a\rightarrow b}[j_{ab}, j'_{ab}] = \sum_{i_a}\Gamma_a[i_a, j_{ab}]\Gamma^*_a[i_a, j'_{ab}] =  \delta[j_{ab}, j'_{ab}]$ due to Eq.~\eqref{eq:local_orthogonality}. This is the induction base. Therefore, for any direction $a\rightarrow b$ we have $G_{a\rightarrow b}[j_{ab}, j'_{ab}] = \delta[j_{ab}, j'_{ab}]$, i.e. Schmidt vectors are orthogonal. Since all Schmidt coefficients in Eq.~\eqref{eq:vidal_gauge_tn} are non-negative by definition, Eq.~\eqref{eq:vidal_gauge_tn} is the valid Schmidt decomposition with respect to any single edge cut of $G$.
\end{proof}
For graphical interpretation of ``if'' (``only if'') part proof see Fig.~\ref{fig:local_orthogonality}(b) (Fig.~\ref{fig:local_orthogonality}(c).)

\begin{figure*}
	\centering
	\includegraphics[width=\linewidth]{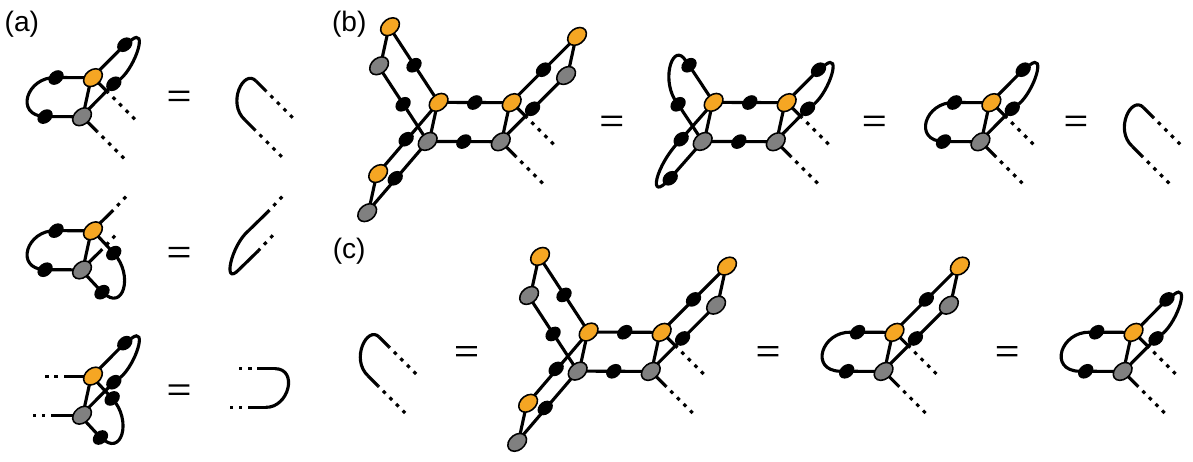}
	\caption{{\bf Diagrammatic interpreteation of the connection between local orthogonality condition and the global orthogonality of Schmidth vectors.} (a) Diagrammatic representation of the local orthogonality condition Eq.~\eqref{eq:local_orthogonality}, it holds for each node of a tensor network tree in the Vidal gauge. (b) Graphical derivation of the Schmidt vector orthogonality from the local orthogonality Eq.~\eqref{eq:local_orthogonality}. From left to right one sequentially applies the local orthogonality to the product of Schmidt vectors leading to the final Kronecker symbol proving orthogonality of Schmidt vectors. (c) Graphical derivation of the local orthogonality Eq.~\eqref{eq:local_orthogonality} from the Schmidt vectors orthogonality. From left to right one notice that the convolution of tree branches is the Kronecker symbol due to the Schmidt vectors orthogonality, then one applies the Schmidt vectors orthogonality condition to some of the branches ending up with the local orthogonality condition.}
	\label{fig:local_orthogonality}
\end{figure*}

Truncations in the Vidal gauge can corrupt its properties. To measure the level of degradation of the Vidal gauge one can introduce the residual of Eq.~\eqref{eq:local_orthogonality} as follows~\cite{tindall2023gauging}:
\begin{eqnarray}
	R = \frac{1}{2|\edges|}\sum_{a\in \nodes}\sum_{b\in\partial a}\Bigg\|\delta[j_{ba}, j'_{ba}] - \sum_{\bfj_{\partial a\setminus b}}\sum_{i_a}\Gamma_a[i_a, \bfj_{\partial a\setminus b}, j_{ba}]\Gamma^*_a[i_a, \bfj_{\partial a\setminus b}, j'_{ba}]\prod_{c\in \partial a\setminus b}\lambda^2_{ca}[j_{ca}]\Bigg\|_{1},
\end{eqnarray}
where $\|\cdot\|_1$ is the trace norm. One can view $R$ as the distance to the Vidal gauge.

\section{Algorithms for finding the Vidal gauge and truncation}
\label{appx:vidal_gauge}
Here we present Algorithm~\ref{alg:canonicalization} for Vidal gauge finding and Algorithm~\ref{alg:truncation} for the Vidal gauge truncation. The graphical interpretation of Algorithm~\ref{alg:canonicalization} is given in Fig.~\ref{fig:vidal_gauge}. Algorithm~\ref{alg:truncation} requires specifying the edge that is being truncated and a new bond dimension that is typically smaller than the initial one.
\begin{algorithm}[H]
	\caption{Vidal gauge finding}\label{alg:canonicalization}
	\begin{algorithmic}
		\Require Connectivity graph $G$, tensors $\{T_a\}_{a=1}^N$, bond dimensions $\{d_{ab}|\{a, b\}\in \edges\}$, messages $\{m_{a\rightarrow b}, m_{b\rightarrow a}\}_{\{a, b\}\in \edges}$
		\Ensure Vidal gauge of a tensor network that includes updated tensors $\{\Gamma_a\}_{a\in\nodes}$ and Schmidt vectors $\{\lambda_{ab}\}_{\{a, b\}\in \edges}$ assigned to each edge of the graph
		\For{$a\in \nodes$} \Comment{Initialization loop}
		\State $\Gamma_a\left[i_a, \bfj_{\partial a}\right] = T_a\left[i_a, \bfj_{\partial a}\right]$
		\EndFor
		\For{$\{a, b\} \in \edges$} \Comment{Loop over all edges finding all Schmidt coefficients and updating all tensors}
		\State $U[k, q], \lambda_{ab}[q], V^\dagger[q, l] = {\rm SVD}\left(\sum_{j=0}^{d_{ab} - 1}m^{\frac{1}{2}}_{a\rightarrow b}[j, k]m^{\frac{1}{2}}_{b\rightarrow a}[j, l]\right)$\Comment{Singular value decomposition}
		\State $\Gamma_a\left[i_a, \bfj_{\partial a}\right] \gets \sum_{j'_{ba}}\sum_{j''_{ba}}\Gamma_a\left[i_a, \bfj_{\partial a\setminus b}, j''_{ba}\right]m^{-\frac{1}{2}}_{a\rightarrow b}[j'_{ba}, j''_{ba}]U[j'_{ba}, j_{ba}]$
		\State $\Gamma_b\left[i_b, \bfj_{\partial b}\right] \gets \sum_{j'_{ab}}\sum_{j''_{ab}}\Gamma_b\left[i_b, \bfj_{\partial b\setminus a}, j''_{ab}\right]m^{-\frac{1}{2}}_{b\rightarrow a}[j'_{ab}, j''_{ab}]V^\dagger[j_{ab}, j'_{ab}]$
		\EndFor
	\end{algorithmic}
\end{algorithm}
\begin{figure}
	\centering
	\includegraphics[width=\linewidth]{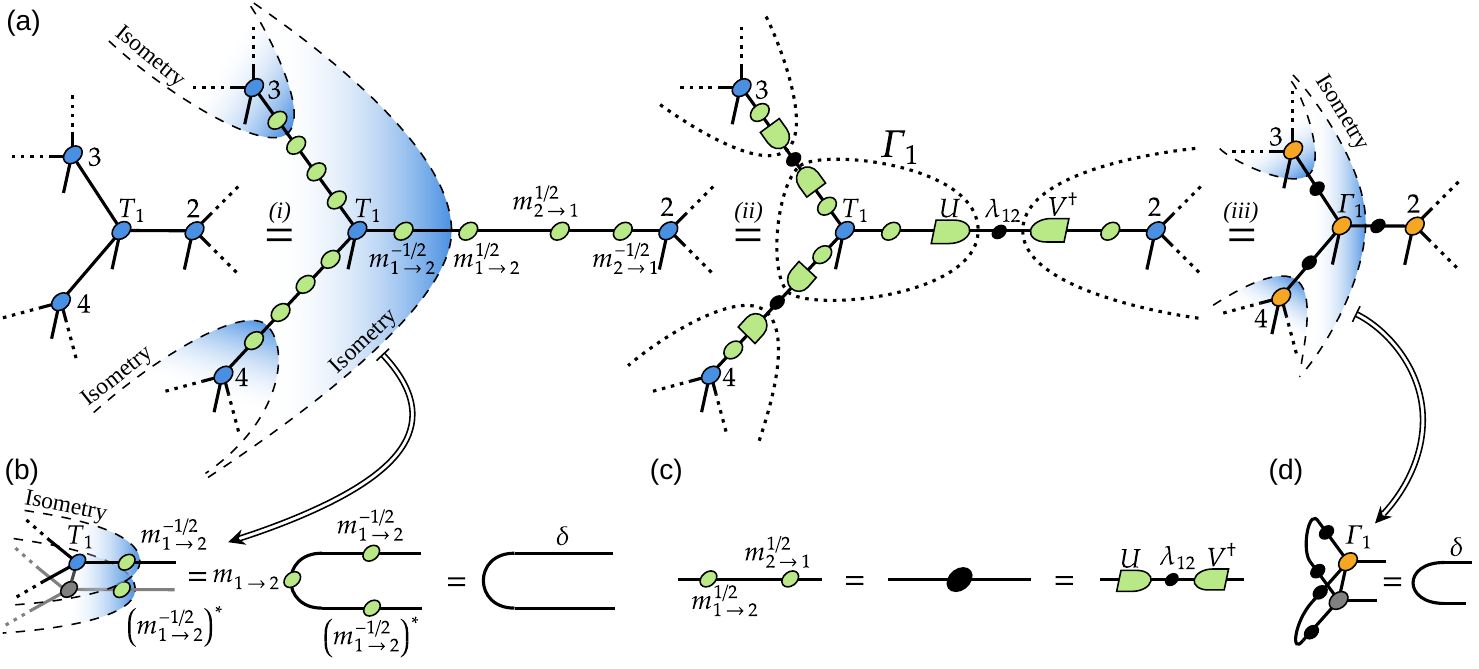}
	\caption{{\bf Diagrammatic interpretation of the algorithm for finding the Vidal gauge of a tree tensor network.} In panel (a) we show a sequence of equalities that leads to the Vidal gauge: (\emph{i}) one inserts four matrices on each edge, e.g. one inserts $m_{1\rightarrow 2}^{-\frac{1}{2}}$, $m_{1\rightarrow 2}^{\frac{1}{2}}$, $m_{2\rightarrow 1}^{\frac{1}{2}}$ and $m_{2\rightarrow 1}^{-\frac{1}{2}}$ on edge $\{1, 2\}$ keeping the tensor tree unchanged. We emphasize, that the purpose of insertion is to make tree branches isometric. This is necessary to make resulting Schmidt vectors orthogonal. In panel (b) we demonstrate that contraction of a branch with its complex conjugation leads to the Kronecker symbol proving the isometry property. (\emph{ii}) One contracts two matrices in the center of each edge into a single matrix and perform an SVD of this matrix as shown in panel (c). (\emph{iii}) One contracts tensors that are in the dotted areas getting the Vidal gauge. In panel (d) we emphasize that the isometry property of the branches, that is also seen as the orthogonality of Schmidt vectors, leads to the local orthogonality Eq.~\eqref{eq:local_orthogonality}, see App.~\ref{appx:local_orthogonality} for the formal prove and Fig.~\ref{fig:local_orthogonality}(c) for more precise visual explanation.}
	\label{fig:vidal_gauge}
\end{figure}
\begin{algorithm}[H]
	\caption{Truncation}\label{alg:truncation}
	\begin{algorithmic}
		\Require Connectivity graph $G$, tensors $\{\Gamma_a\}_{a=1}^N$, bond dimensions $\{d_{ab}|\{a, b\}\in \edges\}$, Schmidt vectors $\{\lambda_{ab}\}_{\{a, b\}\in\edges}$, edge that is being truncated $\{a, b\}$, new bond dimensions $\chi\leq d_{ab}$
		\Ensure Tensors $\{\Gamma_a\}_{a=1}^N$ and Schmidt vectors $\{\lambda_{ab}\}_{\{a, b\}\in\edges}$ with truncated edge $\{a, b\}$
		\State $\lambda_{ab}[j]\gets\sum_{j'}\delta_\chi[j, j']\lambda_{ab}\left[j'\right]$ \Comment{$\delta_{\chi}$ is the truncated Kronecker delta of size $\chi\times d_{ab}$, it removes the smallest Schmidt coefficients}
		\State $\Gamma_a\left[i, \bfj_{\partial a}\right] \gets \sum_{j'_{ba}} \delta_{\chi}[j_{ba}, j'_{ba}] \Gamma_a\left[i, \bfj_{\partial a\setminus b}, j'_{ba}\right]$
		\State $\Gamma_b\left[i, \bfj_{\partial b}\right] \gets \sum_{j'_{ab}} \delta_{\chi}[j_{ab}, j'_{ab}] \Gamma_b\left[i, \bfj_{\partial b\setminus a}, j'_{ab}\right]$
	\end{algorithmic}
\end{algorithm}

\section{Two-qubit gate application in the Vidal gauge}
\label{appx:simple_update}
In this Appendix we consider a two-qubit unitary gate application to neighboring qubits $a$ and $b$ in the Vidal gauge. First, we contract $\Gamma_a$, $\Gamma_b$, the two-qubit gate $W$, and neighboring Schmidt vectors into a single tensor $\Theta$ as follows
\begin{eqnarray}
	{\small \Theta[i_a, \bfj_{\partial a\setminus b}, i_b, \bfj_{\partial b\setminus a}] = \sum_{i'_b, i'_a, j_{ab}}W[i_a, i_b, i'_a, i'_b]\Gamma_a\left[i'_a, \bfj_{\partial a}\right]\Gamma_b\left[i'_b, \bfj_{\partial b}\right]\lambda_{ab}[j_{ab}]\left(\prod_{c\in\partial a\setminus b}\lambda_{ca}[j_{ca}]\right)\left(\prod_{c\in\partial b\setminus a}\lambda_{cb}[j_{cb}]\right)},
\end{eqnarray}
where the unitary matrix $W$ is viewed as a tensor of rank $4$ with two input indices and two output indices. Next, we perform an SVD of tensor $\Theta$ by splitting its indices into two groups and flattening those groups into two matrix indices:
\begin{eqnarray}
	U\left[i_a, \bfj_{\partial a\setminus b}, j\right], \lambda[j], V^\dagger[j, i_b, \bfj_{\partial b\setminus a}] = {\rm SVD}\left(\Theta\left[i_a, \bfj_{\partial a\setminus b}, i_b, \bfj_{\partial b\setminus a}\right]\right).
\end{eqnarray}
Finally, we define the updated versions of $\Gamma_a$, $\Gamma_b$, and $\lambda_{ab}$ as
\begin{eqnarray}
	&&\tilde{\Gamma}_a\left[i_a, \bfj_{\partial a}\right] = U\left[i_a, \bfj_{\partial a\setminus b}, j_{ba}\right]\prod_{c\in\partial a\setminus b}\lambda^{-1}_{ca}[j_{ca}],\nonumber\\
	&&\tilde{\Gamma}_b\left[i_b, \bfj_{\partial b}\right] = V^*\left[j_{ab}, i_b, \bfj_{\partial b\setminus a}\right]\prod_{c\in\partial b\setminus a}\lambda^{-1}_{cb}[j_{cb}],\nonumber\\
	&&\tilde{\lambda}_{ab}[j] = \lambda[j],
\end{eqnarray}
respectively. $\tilde{\lambda}_{ab}$ is the Schmidt vector by construction, while other edges remain intact. Thus the Vidal gauge is preserved.
\begin{figure*}
	\centering
	\includegraphics[width=\linewidth]{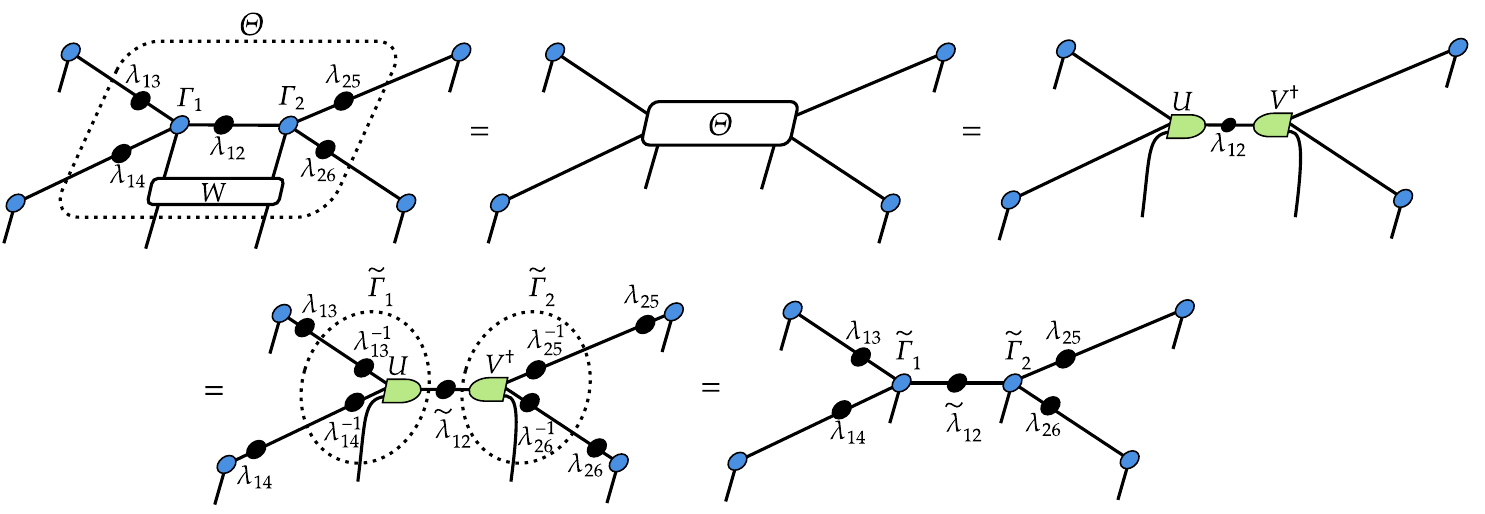}
	\caption{{\bf Diagrammatic representation of the algorithm for a two-qubit gate application to a state in the Vidal gauge.} From the beginning of the equality to its end: one contracts $\lambda_{13}$, $\lambda_{14}$, $\lambda_{12}$, $\lambda_{25}$, $\lambda_{26}$, $\Gamma_1$, $\Gamma_2$ and $W$ into a single tensor $\Theta$; one apply SVD to $\Theta$; one inserts identity matrices decomposed into the product of mutually inverse diagonal matrices $\lambda$ and $\lambda^{-1}$; one computes new vertex tensors $\tilde{\Gamma}_{1}$ and $\tilde{\Gamma}_{2}$ by contracting $\lambda^{-1}_{13}$, $\lambda^{-1}_{14}$ and $U$ into $\tilde{\Gamma}_{1}$ and $\lambda^{-1}_{25}$, $\lambda^{-1}_{26}$ and $V^\dagger$ into $\tilde{\Gamma}_{2}$. Note that the resulting tensor tree is in the Vidal gauge if it was in the Vidal gauge before the gate application.}
	\label{fig:simple_update}
\end{figure*}
Note that the bond dimension $d_{ab}$ is higher after the application of $W$. See Fig.~\ref{fig:simple_update} for schematics.

\section{QA dynamics trotterization}
\label{appx:annealing_trotterization}
For QA, one wants to solve the following Schr\"{o}dinger equation
\begin{eqnarray}
	\label{eq:schrodinger_eq}
	i\frac{\partial\ket{\Psi(t)}}{\partial t} = ((1 - s(t)) H_{\rm Ising} + s(t) H_{\rm mixing})\ket{\Psi(t)},
\end{eqnarray}
where $s(t) = 1 - \frac{t}{T}$ is the schedule function, $H_\text{I}$ is the initial Hamiltonian, $H_\text{F}$ is the target Hamiltonian, and $T$ is the total annealing time. 
We chose $H_{\rm mixing} = \sum_{a=1}^N X_a$ where we start in the maximal energy state $\ket{\Psi(0)} = \ket{+}^{\otimes N}$, and $H_{\rm Ising} = \sum_{\{a,b\} \in \edges}J_{ab} \ Z_aZ_b + \sum_{a=1}^N h_a Z_a$.

To simulate Eq.~\eqref{eq:schrodinger_eq}, we discretize its evolution operator $U(T)$ in time as follows
\begin{eqnarray}
	\label{eq:discretizatoion}
	U(T) \approx \prod_{k = 1}^{T / \delta t} U_X\Big(\delta t \cdot s(k\delta t)\Big) \ U_{Z}\Big(\delta t \cdot \left[1 - s(k \delta t)\right]\Big),
\end{eqnarray}
where $U_Z(t) = \prod_{\{a, b\}\in \edges} \exp\Bigg(-it\cdot\Bigg( J_{ab} \ Z_aZ_b + \frac{h_a}{D_a} \ Z_a + \frac{h_b}{D_b} \ Z_b\Bigg)\Bigg)$ is the interaction layer, $U_X(t) = \prod_{a = 1}^N \exp\left(-it\cdot X_a\right)$ is the mixing layer, $\delta t$ is the discretization time step, $k$ enumerates discrete time steps, and $D_a = |\partial a|$ is the degree of the $a$-th vertex in the graph. In all numerical experiments, we chose $\delta t = 0.2$. Note that the $U_Z$ layer factorizes into a product of two-qubit gates, and $U_X$ layer factorizes into a product of one-qubit gates. It allows one to apply simple update algorithm, BP algorithm and truncations to simulate the QA dynamics.

\section{Comparison of the exact entanglement entropy dynamics with the mean-field computation}
\label{appx:entropy_dynamics_cmp}

In this section we compare the entanglement entropy computed exactly and using the approximation of Eq.~\eqref{eq:apprx_entropy}.
For this purpose we generate a random $3$-regular graph, compute the bipartition Eq.~\eqref{eq:bipartition} and compute the dynamics of the entanglement entropy with respect to this bipartition exactly and using Eq.~\eqref{eq:apprx_entropy}. We define the error of the mean-field based entropy dynamics relative to the exact entropy as follows
\begin{eqnarray}
    \label{eq:entropy_error}
    {\rm err} = \frac{\|S^*_{\rm exact} - S^*_{\rm mean-field}\|_2}{\|S^*_{\rm exact}\|_2},
\end{eqnarray}
where $\|\cdot\|_2$ is the $2$-norm of a vector, and $S^*$ is considered as a vector indexed by discrete time. In Fig.~\ref{fig:entropy_validation}(a) we plot error Eq.~\eqref{eq:entropy_error} for $80$ random $3$-regular graphs in total ($20$ graphs per $N$), $N$ varying from $14$ to $20$ and $T = 20$. We find that the error varies from small to very high, but on median it is about $10\%$. The median error does not increase with $N$ and more likely the standard deviation of the error is narrowing with $N$ due to the self-averaging phenomenon. We also observed, that high error examples mostly correspond to the quench regime where long range correlations break the mean-field approximation Eq.~\eqref{eq:apprx_entropy}.

An example of the typical behavior of the approximate entropy compared to the exact one is plotted in Fig.~\ref{fig:entropy_validation}(b). It corresponds to the red dot in Fig.~\ref{fig:entropy_validation}(a) which has about median error. One can see that qualitatively the approximate entropy dynamics resembles well the exact one.
\begin{figure}
	\centering
	\includegraphics[width=1.\linewidth]{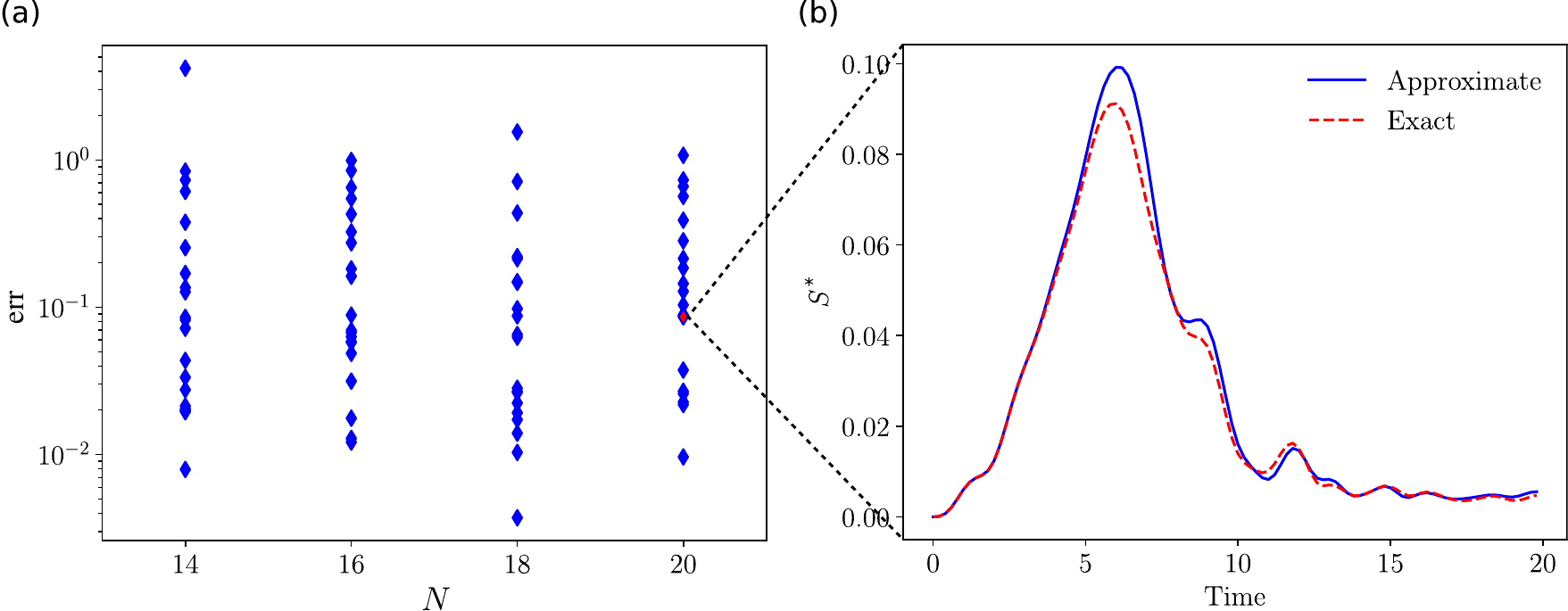}
	\caption{{\bf The comparison of the entanglement entropy dynamics for a small QUBO problem computed exactly and by the mean-field approximation Eq.~\eqref{eq:apprx_entropy}.} (a) The relative error Eq.~\eqref{eq:entropy_error} of the entropy computed approximately using the mean-field approximation Eq.~\eqref{eq:apprx_entropy} for different $N$, $20$ different random $3$-regular graphs per $N$ and $T = 20$. (b) The typical behavior of the entropy dynamics computed using the mean-field approximation Eq.~\eqref{eq:apprx_entropy} compared to the exact entropy dynamics. It corresponds to the red dot in the panel (a) which has approximately the median error of our studied instances. }
	\label{fig:entropy_validation}
\end{figure}

\end{document}